\newtheorem{assumptions}{Assumptions}
\newtheorem{definition}{Definition}
\newtheorem{theorem}{Theorem}
\newtheorem{proposition}{Proposition}
\DeclareMathOperator*{\argmin}{argmin}
\newcommand*{\defeq}{\stackrel{\text{def}}{=}}
\definecolor{rougeINRIA}{RGB}{230, 51, 18}
\definecolor{grisbleuINRIA}{RGB}{56,66,87}
\definecolor{orangeINRIA}{RGB}{240, 126, 38}
\definecolor{lilasINRIA}{RGB}{155,0,79}
\definecolor{bleuINRIA}{RGB}{20,136,202}
\definecolor{vertINRIA}{RGB}{149,193,31}
\definecolor{jauneINRIA}{RGB}{255,205,28}
\definecolor{mauveINRIA}{RGB}{101,97,169}
\definecolor{bleuclairINRIA}{RGB}{137,204,202}
\definecolor{vertclairINRIA}{RGB}{199,214,79}
\title{Learning in Conjectural Stackelberg Games}
\author[1,*]{Francesco Morri}
\author[1]{Hélène Le Cadre}
\author[1]{Luce Brotcorne}
\affil[1]{Univ. Lille, CNRS, Inria, Centrale Lille, UMR 9189 CRIStAL, F-59000 Lille, France}
\affil[*]{Corresponding author: francesco.morri@inria.fr}
\begin{document}

\maketitle

\begin{abstract}
We extend the formalism of Conjectural Variations games to Stackelberg games involving multiple leaders and a single follower.
To solve these nonconvex games, a common assumption is that the leaders compute their strategies having perfect knowledge of the follower's best response. However, in practice, the leaders may have little to no knowledge about the other players' reactions. To deal with this lack of knowledge, we assume that each leader can form conjectures about the other players' best responses, and update its strategy relying on these conjectures. Our contributions are twofold: (i) On the theoretical side, we introduce the concept of Conjectural Stackelberg Equilibrium -- keeping our formalism conjecture agnostic -- with Stackelberg Equilibrium being a refinement of it. (ii) On the algorithmic side, we introduce a two-stage algorithm with guarantees of convergence, which allows the leaders to first learn conjectures on a training data set, and then update their strategies. Theoretical results are illustrated numerically. 
\end{abstract}

\section{Introduction}
Game theory has been used in many fields to model and understand the outcomes of situations involving strategic agents in competition \cite{baar_dynamic_1998}.
In this paper, we focus on bridging the gap between Stackelberg games, involving hierarchical decision making, and Conjectural Variations (CVs) games by introducing Conjectural variations Stackelberg Equilibrium (CSE) as a broad solution concept that encompasses Stackelberg Equilibrium (SE) as a refinement. 

% CONJECTURAL GAMS
\textbf{Conjectures in Games.} CV games are characterized by \textit{conjectures}, which are functions used by players to model their adversaries' reactions to their own action \cite{figuieres_theory_2004}. Conjectures are especially relevant in a context of bounded rationality \cite{simon_behavioral_1955}, where each player chooses its action based on its subjective perception of the other players' behaviors. 
CV games have been introduced in the context of duopolies \cite{bresnahan_duopoly_1981,friedman_bounded_2002,perry_oligopoly_1982}, considering classical Bertrand and Cournot equilibrium models. A new stream has appeared with the works of \cite{calderone_consistent_2023,chasnov_opponent_2019} who rely on variational analysis and dynamic system theory to provide generic interpretations of conjectural games.
CV games have not been extensively applied in real-world problems, though some applications can be found in the context of electricity markets \cite{diaz_electricity_2010,chen_conjectural-variations_2021}.

%STACKELBERG
\textbf{Stackelberg Games.} We consider non-cooperative games involving a finite set of leaders at the upper level which compute their strategies based on their anticipations of the best response of the follower, which reacts rationally to the leaders' strategies. 
Some cornerstone theoretical contributions have emerged on the analysis of multi-leader single follower (MLSF) Stackelberg games \cite{sherali_multiple_1984,leyffer_solving_2010}, but algorithmic contributions in this area remain scarce. Applications of MLSF Stackelberg games can be found in jamming problems \cite{zhang_multi-leader_2018}, strategic bidding in deregulated markets \cite{morri_learning_2024}, security and privacy games \cite{gan_stackelberg_2018}. 

%CONNECTIONS
\textbf{Linking Conjectural and Stackelberg Games.} Most approaches in the literature consider leaders which have the ability to anticipate the best response of the follower. In reality, leaders rely on conjectures to compute their adversaries' best responses. However, the use of conjectures in Stackelberg games has not been extensively studied. Close works are related to dynamic conjectural games for duopolies \cite{figuieres_theory_2004} and reverse Stackelberg games \cite{groot_reverse_2012} where the leaders announce their strategies as functions of the follower's action. CV Equilibrium (CVE) has been introduced in \cite{olsder_phenomena_2009} as a relevant solution concept to cope with incomplete information and implicit cooperation, as well as a shorthand for dynamic interactions. In \cite{rubinstein_rationalizable_1994}, specific focus is put on the design of conjectures leading to a Nash Equilibrium. Conjectures can also be linked to models of bounded rationality, in place of prospect theory \cite{kahneman_prospect_1979,fochesato2025}, 
which requires to use complex subdifferential approaches to solve nonconvex nonsmooth games.

%LEARNING IN GAMES
\textbf{Learning in Games with Hierarchical Structure.} There are two main approaches: either applying learning techniques to games, or using game theory to develop new learning algorithm. In our paper, we focus on the first class, leaving the possibility of future extensions in the second class. 
Learning in Stackelberg games is considered as a challenge. In this direction, \cite{fiez_implicit_2020} makes a significant contribution by developing a gradient-based learning rule for the leader, while the follower employs a best-response strategy. In the same vein, many algorithms involving a double-loop structure are proposed in the literature to compute stationary solutions \cite{grontas}. Single-loop stochastic algorithm has recently been proposed to tackle bilevel optimization problems \cite{hong}, allowing fast convergence rates. On the conjectural variations game side, \cite{wellman_conjectural_1998} propose learning dynamic rules for two player conjectural games. A similar idea is extended by \cite{chasnov_opponent_2019}, considering gradient-based learning methods for anticipating the adversary's reaction in general conjectural games. 

%CONTRIBUTIONS
\textbf{Contributions.} We introduce CSE as a new solution concept to analyze Stackelberg games involving conjectures from the leaders on their adversaries' best responses. The distance between CV and Stackelberg game outcomes is upper bounded, and conditions on conjectures to reach a SE are identified. Then, we propose a two-stage algorithm to learn a CSE and provide convergence guarantees. Finally, the theoretical results are illustrated numerically.

%STRUCTURE
\textbf{Structure.} In Sec.~\ref{sec:conj_stack} we introduce the main concepts of CV Games and MLSF Stackelberg Games, to then provide a definition for the novel class of Conjectural Stackelberg games and its equilibria. We then analyses CS Games in Sec.~\ref{sec:analysis}, relating them to standard Stackelberg Games. In Sec.~\ref{sec:algo} we introduce the \texttt{COSTAL} algorithm, with its proof of convergence, and in Sec.~\ref{sec:experiments} we provide numerical results obtained on multi player games.

\subsection{Notations}
We consider a set $\mathcal{N}\defeq \{1,\hdots,N\}$ of $N$ agents, each with strategy $x_i \in \mathbb{R}^{m_i},\,m_i \in \mathbb{N}^\star$, with $m\defeq \sum_i m_i$, and objective function $f_i : \mathbb{R}^{m} \rightarrow \mathbb{R}$. We define $\mathcal{X}_i \subseteq \mathbb{R}^{m_i}$ as the feasibility set of player $i$. We introduce $\mathcal{X}_{-i} \defeq \prod_{j\neq i}\mathcal{X}_j$ as the product of the feasibility sets of all the players in $\mathcal{N}$ except $i$, and $\mathcal{X} \defeq \prod_{i\in\mathcal{N}} \mathcal{X}_i$ as the joint feasibility set. We denote $x \defeq (x_i)_{i\in\mathcal{N}}$ as the collective strategy of all the players. Throughout the paper $\nabla$ will stand for the total derivative, $\nabla_i$ the partial derivative w.r.t. $x_i$ and $\nabla_{-i}$ refers to all the terms $(\nabla_{j})_{j\in\mathcal{N}\setminus\{i\}}$. For second-order derivatives we use the notation: $\nabla_{i,j}$ or $\nabla_i^2$, while we let $D$ be the gradient. Finally, $\|\cdot\|$ stands for the $L_2$ norm and $C^d$ refers to the class of continuously differentiable functions of order $d$.

\section{Conjectural Stackelberg Games}
\label{sec:conj_stack}
In CV games, each player builds conjectures about the other players' best responses. Two scenarios are considered in the literature. In the first scenario, each player $i \in \mathcal{N}$ assumes that the other players react homogeneously to its strategy thus giving rise to the modified objective function $f_i(x_i,\bar{x}),\,\forall i \in \mathcal{N}$ where $\bar{x} \defeq \sum_{j\in\mathcal{N}\setminus\{i\}} x_j$. In the second scenario, player $i$ relies on its decision variable $x_i$ to form conjectures
\begin{equation*}
    \gamma_i^j:\mathcal{X}_i\rightarrow\mathcal{X}_j,\,\forall j \in \mathcal{N} \setminus \{i\}.
\end{equation*}
The first scenario is the one discussed the most in the literature, since it can be simplified as a two-player game, and is easy to analyze. We instead focus on the second scenario which has more interest from a practical point of view. Before defining our game, we explicit our standing assumptions.
\begin{assumptions}
\label{ass:obj}
We consider the following assumptions for the game:
\begin{enumerate}
    \item $f_i(\cdot)\in C^{d}, \forall i\in\mathcal{N}$, with $d \geq 2$;
    \item $f_i(\cdot),\forall i \in\mathcal{N}$ is $M_1-$Lipschitz and convex;
    \item The set $\mathcal{X}_i$ is compact and convex.
\end{enumerate}
\end{assumptions}
We now present some standard definitions from the CV game literature.
\begin{definition}[Conjectural Variations (CV) Game \cite{baar_dynamic_1998}]
    \label{def:cvg}
    Given a set $\mathcal{N}$ of players, each with decision variables $x_i\in\mathcal{X}_i\subseteq\mathbb{R}^{m_i}$ and objective function $f_i(x_i,x_{-i})$, we define a conjecture as the function $\gamma_i^j:\mathcal{X}_i\rightarrow\mathcal{X}_j,\forall j\in \mathcal{N}\setminus\{i\}$. A conjecture variations (CV) game is defined as:
    \begin{equation}
        \forall i \in \mathcal{N}, \quad \min_{x_i}\{f_i(x_i,\Tilde{x}_{-i})\,|\,\Tilde{x}_{-i}=(\gamma_i^j(x_i))_{j\in\mathcal{N}\setminus\{i\}}\}.
    \end{equation}
\end{definition}
The outcome of CV games can be analyzed relying on CVE. We let $\{\gamma_i^j(x_i)\}_{j\in\mathcal{N}\setminus\{i\},\,i\in\mathcal{N}}$ be the stack of all the players' conjectures. 
\begin{definition}[Conjectural Variations Equilibrium \cite{figuieres_theory_2004}]
\label{def:cve}
    Given the conjectures $\{\gamma_i^j(x_i)\}_{j\in\mathcal{N}\setminus\{i\},\,i\in\mathcal{N}}$, the collective strategy $x^c\defeq(x_i^c)_i$ is a Conjectural Variations equilibrium (CVE) if $x^c$ is a solution of the optimization problem
    \begin{equation}
        \forall\;i\in\mathcal{N},\;\;\min_{x_i}\{f_i(x_i,\Tilde{x}_{-i}) \quad | \quad \Tilde{x}_{-i}=(\gamma_i^j(x_i))_{j\neq i}\}.
    \end{equation}
\end{definition}
Definition~\ref{def:cve} may generate a multiplicity of solutions, since it is always possible to find a set of conjectures for any given point such that it becomes a CVE \cite{figuieres_theory_2004}. Notice that this also entails that for a given point in the game, which would provide a better result for the players (closer to the social welfare), but it is not stable, we may obtain a set of conjectures that \textit{transforms} it into a stable point.

In order to deal with the multiplicity of solutions, the concept of consistency has been introduced as a refinement of a CVE.
\begin{definition}[Consistent Conjectural Variations Equilibrium \cite{bresnahan_duopoly_1981}]
\label{def:ccve}
    Given the conjectures $\{\gamma_i^j(x_i)\}_{j\in\mathcal{N}\setminus\{i\},\,i\in\mathcal{N}}$, the collective strategy $x^c\defeq (x_i^c)_i$ forms a Consistent Conjectural Variations Equilibrium (CCVE) if
    \begin{enumerate}
        \item $x^c$ is a CVE for the conjectures $\gamma_i^j(x_i), \forall j\in \mathcal{N}\setminus\{i\},\,\forall i\in\mathcal{N}$;
        \item There exists $\epsilon > 0$ such that $\nabla\gamma_i^j(x_i) = \nabla_i x_j^c(x_{-j})$,$\forall\;x_i\in\mathcal{X}_i$ such that $\|x_i^c-x_i\|<\epsilon, \, \forall i,j\in\mathcal{N}, i\neq j$;
    \end{enumerate}
    where $x_i(x_{-i})$ denotes player $i$'s best-response strategy.
\end{definition}
The existence of CCVE is not easy to establish \cite{figuieres_theory_2004}. For instance, duopolies with quadratic marginal costs may not allow for polynomial or symmetric analytic consistent conjectures. Similarly, \cite{calderone_consistent_2023} show that for quadratic games with affine conjectures, the existence of CCVE boils down to finding solutions to coupled asymmetric Riccati equations. Positive results for the existence require, in general, specific settings; such results exist, e.g., for specific cases of Cournot's duopoly and voluntary contributions to a public good \cite{figuieres_theory_2004}, and in electricity markets where uniqueness is also considered \cite{liu_existence_2007}.

\subsection{Multi-Leader Single-Follower Stackelberg Games}
\label{sec:mlsf}
We consider Stackelberg games involving a set $\mathcal{N}$ of leaders and a single follower. The $N$ leaders act simultaneously, anticipating the reaction of the follower; then the follower replies after having observed the decision variables of all the leaders.
We formally define a Stackelberg game as follows:
\begin{definition}[Multi-Leader Single-Follower Stackelberg Game]
\label{def:sg}
    Consider a set $\mathcal{N}$ of leaders and a follower, with respective decision variables $x_i\in\mathcal{X}_i\subseteq\mathbb{R}^{m_i},\forall i \in \mathcal{N}$, and $y\in\mathcal{Y}\subseteq\mathbb{R}^{m_y}$, and objective functions: $f_i(x_i,x_{-i},y),\forall i \in \mathcal{N}$, and $g(x,y)$. The Stackelberg game can be described as follows:
\begin{subequations}
    \begin{align}
        \forall i \in \mathcal{N}, &\min_{x_i\in\mathcal{X}_i,y\in\mathcal{Y}} \, f_i(x_i,x_{-i},y),\\
        & \quad \mbox{\text{s.t.}} \quad y\in\argmin_{y\in\mathcal{Y}}g(x,y).
    \end{align}
\end{subequations}
\end{definition}
\begin{assumptions}
\label{ass:follower_game}
We need additional assumptions on the follower's problem:
\begin{enumerate}
    \item The follower has a unique solution once $x\in\mathcal{X}$ is fixed;%, and its optimization problem can be trivially solved;
    \item The follower's best response, $y(x)$, is $M_2-$Lipschitz;
    \item $g(\cdot)\in C^{d'}$ with $d'\geq 2$.
\end{enumerate}
\end{assumptions}
We can see how the hierarchy introduced in Definition~\ref{def:sg} already implicitly uses the concept of conjectures: the leaders need to act before the follower, meaning that they have to use an approximation, or a \textit{conjecture}, in place of the actual strategy of the follower. If we extend the anticipation to all players, meaning that the leaders have conjectures about the follower's strategy and also about the other leaders, we obtain a new class of games, that we call Conjectural Stackelberg Games.
Considering Definitions~\ref{def:cvg} and \ref{def:sg}, a Conjectural Stackelberg Game can be formalized as follows:
    \begin{subequations}
    \begin{align}
        \forall i \in \mathcal{N}, &\min_{x_i\in\mathcal{X}_i} \, f_i\left(x_i,(\gamma_i^j(x_i))_{j\neq i},\gamma_i^y(x_i)\right),\\
        & \min_{y\in\mathcal{Y}}g(x,y).
    \end{align}
    \label{eq:csg}
\end{subequations}
In Figure~\ref{fig:game_diag}, we represent schematically the structure of these games. The $N$ leaders receive feedback from their conjectures and update their strategies based on that information; then the follower optimizes its strategy after having observed the leaders' strategies.
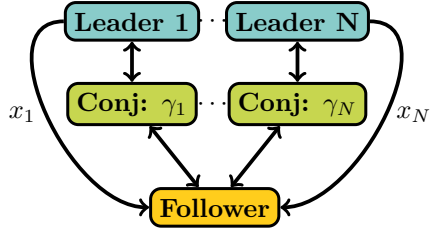
\begin{figure}[t]
        \centering
        \begin{tikzpicture}[
    box/.style={rectangle, draw, rounded corners, minimum width=1cm, minimum height=0.5cm, line width=0.4mm},
    leader/.style={box, fill=bleuclairINRIA},
    conj/.style={box, fill=vertclairINRIA},
    follower/.style={box, fill=jauneINRIA},
    scale=0.55
]

% Leaders
\node[leader] (L1) at (0,2) {\textbf{Leader 1}};
\node[leader] (LN) at (4,2) {\textbf{Leader N}};

% Conjs
\node[conj] (C1) at (0,0) {\textbf{Conj: }$\gamma_1$};
\node[conj] (CN) at (4,0) {\textbf{Conj: }$\gamma_N$};

% Follower
\node[follower] (F) at (2,-2.5) {\textbf{Follower}};

% Connections
\draw[<->, line width =0.5mm] (L1) -- (C1);
\draw[<->, line width =0.5mm] (LN) -- (CN);
\draw[<->, line width =0.5mm] (F) -- (C1);
\draw[<->, line width =0.5mm] (F) -- (CN);
\draw[<-, line width =0.5mm, out=180,in=180, looseness=1] (F.west) to node[left]{$x_1$} (L1.west);
\draw[<-, line width =0.5mm, out=0,in=0, looseness=1] (F.east) to node[right]{$x_N$} (LN.east) ;

% Dots
\node at (2,2) {$\cdots$};
\node at (2,0) {$\cdots$};

\end{tikzpicture}
\caption{Schematic representation of a Conjectural Stackelberg Game.}
\label{fig:game_diag}
\end{figure}
Following the literature on conjectural games, we introduce the novel concept of Conjectural Stackelberg Equilibrium (CSE):
\begin{definition}[Conjectural Stackelberg Equilibrium]
\label{def:cse}
    Given the conjectures $\{\{\gamma_i^j(x_i)\}_{j\in\mathcal{N}\setminus\{i\}},\gamma_i^y(x_i)\}_{i\in\mathcal{N}}$, $(x^*,y^*)$ is a Conjectural Stackelberg Equilibrium (CSE) if it is a solution of Eqs.~\eqref{eq:csg}. %(\textcolor{green}{it's 4.a I guess.})
\end{definition}
Notice that Definition~\ref{def:cse} does not consider the notion of consistency. Following the same steps as for standard conjectural games, we define requirements for CSE to be consistent:
\begin{proposition}[Consistency for CSE]
\label{prop:ccse}
    Given the conjectures $\{\{\gamma_i^j(x_i)\}_{j\in\mathcal{N}\setminus\{i\}},\gamma_i^y(x_i)\}_{i\in\mathcal{N}}$, a CSE $(x^*,y^*)$ is a Consistent CSE (CCSE) if
    \begin{enumerate}
        \item There exists $\epsilon_L>0$ such that $\nabla\gamma_i^j(x_i) = \nabla_i x_j^c(x_{-j})$,$\forall\;x_i\in\mathcal{X}_i$; $||x_i^*-x_i||<\epsilon_L, \forall i,j\in\mathcal{N}, i\neq j$;
        \item There exists $\epsilon_F>0$ such that $\nabla\gamma_i^y(x_i)=\nabla_i y(x)$, $\forall\;x_i\in\mathcal{X}_i$, $||x_i^*-x_i||<\epsilon_F, \forall i\in\mathcal{N}$ and $\forall y\in\mathcal{Y}$, $||y^*-y||<\epsilon_F$.
    \end{enumerate}
\end{proposition}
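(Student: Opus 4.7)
The plan is to read Proposition~\ref{prop:ccse} as the natural lifting of Definition~\ref{def:ccve} (CCVE consistency) to the hierarchical Stackelberg setting of Eqs.~\eqref{eq:csg}. Since a CSE involves two classes of conjectures, horizontal ones between the leaders and vertical ones from each leader toward the follower, the proof decomposes into verifying a local first-order matching for each class, with conditions~1 and~2 of the proposition capturing these two matchings respectively.

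For condition~1, I would invoke Definition~\ref{def:ccve} directly. Restricting attention to the $N$ leaders and regarding the dependence on $y$ as parametric, each leader $i$ still forms conjectures $\gamma_i^j$ about leader $j$'s reaction, so the CCVE requirement translates verbatim into the local identity $\nabla\gamma_i^j(x_i)=\nabla_i x_j^c(x_{-j})$ on a ball of radius $\epsilon_L$ around $x_i^*$. No new machinery is needed beyond checking that Definition~\ref{def:ccve} applies with $x^*$ playing the role of $x^c$ and that Assumption~\ref{ass:obj} guarantees differentiability of the best-response maps.

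For condition~2, I would establish the analogous statement for the follower. Under Assumption~\ref{ass:follower_game}, the follower admits a unique and $M_2$-Lipschitz best response $y(x)$; combined with the smoothness of $g$ and the implicit function theorem applied to $\nabla_y g(x,y)=0$, this yields well-defined partial derivatives $\nabla_i y(x)$. Requiring $\nabla\gamma_i^y(x_i)=\nabla_i y(x)$ on a neighborhood of $(x^*,y^*)$ of radius $\epsilon_F$ then ensures that, at first order, the vertical conjecture is indistinguishable from the true reaction of the follower, mirroring what condition~1 enforces among the leaders.

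The main obstacle I anticipate is making the \emph{local} quantifier precise: both $x_j^c$ and $y$ depend on the joint strategy, not only on $x_i$, so the matching must be interpreted with $x_{-i}$ held fixed at $x_{-i}^*$ while only $x_i$ varies, and only then can $\gamma_i^j$ and $\gamma_i^y$ be compared to the projected best responses via a first-order Taylor expansion around $x_i^*$. Once that convention is fixed, both conditions follow directly. The existence of a conjecture tuple satisfying these identities simultaneously is a separate question — as emphasized after Definition~\ref{def:ccve} it is typically the hard part of the consistency theory — and this proposition only spells out the defining local conditions rather than asserting existence.
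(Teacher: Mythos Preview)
The paper provides no proof for Proposition~\ref{prop:ccse}: it is stated without a proof environment and functions as a \emph{definition} of a Consistent CSE, extending Definition~\ref{def:ccve} to the hierarchical setting exactly as you describe. Your proposal correctly recognizes this in its final sentence, and your motivation --- condition~1 as the verbatim CCVE requirement among the leaders, condition~2 as its analogue for the follower's best response --- is precisely the reading the paper intends, so there is nothing further to prove and your interpretation matches the paper's.
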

The set of CCSEs is a superset of the set of Stackelberg Equilibria (SE), since if the conjectures coincide with the best responses, the game outcome coincides with a SE. 

\paragraph{Differential Equilibrium Definition}
Given conjectures $\{\{\gamma_i^j(x_i)\}_{j\in\mathcal{N}\setminus\{i\}},\gamma_i^y(x_i)\}_{i\in\mathcal{N}}$, we define first and second-order conditions for an equilibrium $(x^*,y^*)$ to be a CSE
\begin{equation*}
    \begin{aligned}
&\nabla f_i\left(x_i^*,(\gamma_i^j(x_i^*))_{j\neq i},\gamma_i^y(x_i^*)\right) = 0, \\
&\nabla^2 f_i\left(x_i^*,(\gamma_i^j(x_i^*))_{j\neq i},\gamma_i^y(x_i^*)\right) > 0,\, \forall i\in\mathcal{N},
\end{aligned}
\end{equation*}
and $\nabla_y g(x^*,y^*) = 0, \, \nabla_y^2\,g(x^*,y^*) > 0$. Notice that under Assumptions~\ref{ass:obj} and \ref{ass:follower_game}, this definition is equivalent to Definition~\ref{def:cse}.

\subsection{Form of the Conjectures}
Our approach is \textit{conjecture-agnostic} since we do not impose any priors on the form for the conjectures. 
\begin{assumptions}
\label{ass:conj}
    The conjecture functions $\gamma_i^j(\cdot),\; \forall i,j \in \mathcal{N}, i\neq j$ are twice differentiable.
\end{assumptions}
Some possible forms for these functions, that we will consider throughout the paper are the following:
\begin{itemize}
    \item \textit{Affine conjectures}: $\gamma_i^j(x_i)=a_ix_i+b_i$, for $a_i, b_i \in \mathbb{R}$;
    \item \textit{Polynomial conjectures}: $\gamma_i^j(x_i)=\sum_{g=0}^Gc_{i,g}x_i^g$, for $c_{i,g}\in\mathbb{R}$;
    \item \textit{Neural networks}: $\gamma_i^j(x_i)=\Phi_i^j(x_i)$, where $\Phi_i^j$ is the architecture for the network used.
\end{itemize}

\section{Game Analysis}
\label{sec:analysis}
We can now proceed with a theoretical analysis of the game, comparing conjectural variations and Stackelberg games.  To that purpose, we equip $\mathcal{X}$ and $\mathcal{Y}$ with distances $d_{\mathcal{X}}$ and $d_{\mathcal{Y}}$ respectively. For $x^1,\,x^2\in\mathcal{X}\times\mathcal{Y}$, where $x^i=(x^i,y(x^i)), i=1,2$ we define $d(x^1,x^2)\defeq || \left(d_{\mathcal{X}}(x^1,x^2),\allowbreak d_{\mathcal{Y}}(y(x^1),y(x^2))\right)||$ as the distance of the projections of $x^1, x^2$ on $\mathcal{X}$ and $\mathcal{Y}$ respectively.
%as the $L^2$-norm of the vector containing the distance of the projections of $x^s,\,x^c$ on $\mathcal{X}$ and $\mathcal{Y}$ respectively. 

\subsection{Bounding the Objective Functions}
\begin{proposition}[Objective Function Bound]
\label{prop:bound}
    Suppose Assumptions \ref{ass:obj}-\ref{ass:conj} hold. Consider a SE $x^s\defeq (x^*,y^*(x^*))$ and a CSE $x^c \defeq (x^{\gamma},y^*(x^{\gamma}))$, then the distance between the objective function of any leader $i \in \mathcal{N}$ at a Stackelberg equilibrium and at a conjectural variations Stackelberg equilibrium is upper bounded
    \begin{equation}
    \|f_i(x^*,y^*(x^*))-f_i(x^{\gamma},y^*(x^{\gamma}))\|\leq R d_{\mathcal{X}}(x^*,x^{\gamma}),
\end{equation}
with $R \defeq M_1\sqrt{1+M_2^2}$.
\end{proposition}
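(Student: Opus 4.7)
The plan is to apply the Lipschitz continuity of $f_i$ jointly in $(x,y)$, and then use the Lipschitz continuity of the follower's best response $y^*(\cdot)$ to convert the resulting $\mathcal{Y}$-component into an $\mathcal{X}$-component. This is really a two-line estimate once the right norm is chosen.

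First I would expand the left-hand side using Assumption~\ref{ass:obj}(2): since $f_i$ is $M_1$-Lipschitz on its whole domain, we have
\begin{equation*}
\|f_i(x^*,y^*(x^*))-f_i(x^{\gamma},y^*(x^{\gamma}))\|
\leq M_1\,\bigl\|\bigl(x^*-x^{\gamma},\,y^*(x^*)-y^*(x^{\gamma})\bigr)\bigr\|,
\end{equation*}
where the norm on the right is the Euclidean norm on $\mathcal{X}\times\mathcal{Y}$, identified with $d$ restricted to pairs whose second coordinate is the follower's true best response.

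Next I would invoke Assumption~\ref{ass:follower_game}(2), namely that $y^*(\cdot)$ is $M_2$-Lipschitz, to bound
\begin{equation*}
\|y^*(x^*)-y^*(x^{\gamma})\|\leq M_2\,d_{\mathcal{X}}(x^*,x^{\gamma}),
\end{equation*}
and then combine the two estimates by writing the product norm explicitly:
\begin{equation*}
\bigl\|\bigl(x^*-x^{\gamma},\,y^*(x^*)-y^*(x^{\gamma})\bigr)\bigr\|
=\sqrt{d_{\mathcal{X}}(x^*,x^{\gamma})^2+\|y^*(x^*)-y^*(x^{\gamma})\|^2}
\leq\sqrt{1+M_2^2}\,\,d_{\mathcal{X}}(x^*,x^{\gamma}).
\end{equation*}
Multiplying by $M_1$ yields the constant $R\defeq M_1\sqrt{1+M_2^2}$ announced in the statement.

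There is no real obstacle: the only subtle point is that both evaluations use the \emph{true} follower best response $y^*$ (not $\gamma^y$), so that the Lipschitzness of $y^*(\cdot)$ can legitimately be applied on the $\mathcal{Y}$-component; this is exactly how the proposition is written. Convexity of $f_i$ and Assumption~\ref{ass:conj} are not needed here, and compactness of $\mathcal{X}_i$ only serves to ensure existence of $x^*$ and $x^\gamma$.
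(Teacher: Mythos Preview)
Your proof is correct and follows essentially the same two-step argument as the paper: apply the $M_1$-Lipschitz bound on $f_i$ with respect to the joint norm on $\mathcal{X}\times\mathcal{Y}$, then control the $\mathcal{Y}$-component via the $M_2$-Lipschitz continuity of $y^*(\cdot)$ before collecting the factor $\sqrt{1+M_2^2}$. Your remark that both evaluations use the true best response $y^*$ (so that Assumption~\ref{ass:follower_game}(2) applies) and that convexity and Assumption~\ref{ass:conj} play no role here is accurate and worth keeping.
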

\begin{proof}
    Let $x^c$ correspond to the decision variable obtained at equilibrium by the leaders considering their conjectures, but evaluated with the \textit{actual} follower's response $y^*(x^{\gamma})$. Using the assumption of Lipschitz continuity of $f_i$ from Assumption \ref{ass:obj}, we obtain:
\begin{equation}
\label{eq:bound_1}
    \|f_i(x^s)-f_i(x^c)\|\leq M_1 d(x^s,x^c),
\end{equation}
which is a first bound for the distance between the objective functions evaluated in $x^s$ and $x^c$.
If we explicit the right-hand side of Eq.~\eqref{eq:bound_1}, we obtain $$d(x^s,x^c)=\sqrt{d_{\mathcal{X}}(x^*,x^{\gamma})^2+d_{\mathcal{Y}}(y^*(x^*),y^*(x^{\gamma}))^2}.$$ %(\textcolor{green}{each term d is to be squared}). 
The second term, considering the distance between the follower's best-responses, can be developed further using the Lipschitz continuity of the follower's best-response from Assumption \ref{ass:follower_game}
\begin{equation}
\label{eq:bound_11}
   d_{\mathcal{Y}}(y^*(x^*),y^*(x^{\gamma}))\leq M_2 d_{\mathcal{X}}(x^*,x^{\gamma}). 
\end{equation}
Combining Eq.~\eqref{eq:bound_1} and Eq.~\eqref{eq:bound_11} concludes the proof.
\end{proof}
%\textcolor{red}{\textit{When presenting Proposition 1, it would have been helpful to defer the proof to an appendix, and instead spend some time explaining the result: }}
%Proposition~\ref{prop:bound} shows that the distance between the the objective function of any leader evaluated in a Stackelberg equilibrium and in a Conjectural Stackelberg equilibrium is only related to how far the strategies of the leaders are in these two equilibria point.

\subsection{Reaching a Stackelberg Equilibrium}
%\textcolor{red}{\textit{I also found Section 3.2 to be confusing - there is no context to what is trying to be shown and instead the authors jump straight into derivations: }}

The previous result linked a Stackelberg equilibrium with a conjectural variations Stackelberg equilibrium. Now, we aim to define connections between Stackelberg equilibrium and consistent conjectural variations Stackelberg equilibrium. Using the differential equilibrium definition from Section \ref{sec:conj_stack}, we write explicitly the full gradient of leader $i$ considering either a SE or a CCSE. First, in $x^s = (x^*,y^*)$, we get
\begin{equation}
\begin{split}
    \nabla f_i(x^s) &= \nabla_i f_i(x^s)+\nabla_{-i}f_i(x^s)\nabla_i x_{-i}^*+\nabla_yf_i(x^s)\nabla_iy^*.
\end{split}
    \label{eq:se_grad}
\end{equation}
Second, in a CCSE, in $x^c = (x_i^{\gamma},\gamma_i^{-i}(x_i^{\gamma}),\gamma_i^y(x_i^{\gamma}))$, and using $\gamma_i^{-i}(x_i)=(\gamma_i^j(x_i))_{j\neq i}$, we have
\begin{equation}
\begin{split}
    \nabla f_i(x^c) &=\nabla_i f_i(x^c)+\nabla_{-i}f_i(x^c)\nabla_i \gamma_i^{-i}(x_i^{\gamma})+\nabla_yf_i(x^c)\nabla_i\gamma_i^y(x_i^{\gamma}).
\end{split}
    \label{eq:ccse_grad}
\end{equation}
We note that Eq.~\eqref{eq:se_grad} and Eq.~\eqref{eq:ccse_grad} are equal if, and only if
\begin{subequations}
    \begin{align*}
        & \nabla_i f_i(x^s)- \nabla_i f_i(x^c) = 0,\\
        &\nabla_{-i}f_i(x^s)\nabla_i x_{-i}^* - \nabla_{-i}f_i(x^c)\nabla_i \gamma_i^{-i}(x_i^{\gamma}) = 0,\\
        &\nabla_yf_i(x^s)\nabla_i y^* - \nabla_yf_i(x^c)\nabla_i\gamma_i^y(x_i^{\gamma}) = 0.
    \end{align*}
\end{subequations}
Since in a CCSE the variations of the conjectures must be equal to the real best-response variations, we can focus on the terms related to the derivatives of the objective function.
Considering the derivative w.r.t. variable $i$ we can rewrite the term $\nabla_i f_i(x^s)- \nabla_i f_i(x^c)$ as $\nabla_i(f_i(x^s)-f_i(x^c))$, and given Assumptions~\ref{ass:obj} for $f_i$, using the mean value theorem we obtain
\begin{equation}
    \nabla_i[f_i(x^s)-f_i(x^c)] = \nabla_i[D f_i((1-\tau)x^c+\tau x^s)(x^s-x^c)^{\top}]
    \label{eq:mvt}
\end{equation}
with $\tau\in[0,1]$. We set $h \defeq (1-\tau)x^c+\tau x^s$, %(\textcolor{green}{I think that t should be $\tau$})
then expanding the right-hand side of Eq.~\eqref{eq:mvt} we obtain the following term:
\begin{equation}
\begin{split}
    \nabla_i[f_i(x^s)-f_i(x^c)] =& \nabla_i^2f_i(h)(x_i^{\gamma}-x_i^*) +\nabla_{-i,i}f_i(h)(\gamma_i^{-i}-x_{-i}^*)\\
    +&\nabla_{y,i}f_i(h)(\gamma_i^y-y^*),
\end{split}
\label{eq:i_term}
\end{equation}
and, with the same procedure, we equivalently obtain similar terms for the derivatives w.r.t. $-i$ and $y$:
\begin{equation}
\begin{split}
    \nabla_{-i}[f_i(x^s)-f_i(x^c)] = &\nabla_{i,-i}f_i(h)(x_i^{\gamma}-x_i^*)+\nabla_{-i}^2f_i(h)(\gamma_i^{-i}-x_{-i}^*)\\
    +&\nabla_{y,-i}f_i(h)(\gamma_i^y-y^*),
\end{split}
\label{eq:-i_term}
\end{equation}
\begin{equation}
\begin{split}
    \nabla_y[f_i(x^s)-f_i(x^c)] = &\nabla_{i,y}f_i(h)(x_i^{\gamma}-x_i^*)+\nabla_{-i,y}f_i(h)(\gamma_i^{-i}-x_{-i}^*)\\
    +&\nabla_y^2f_i(h)(\gamma_i^y-y^*).
\end{split}
\label{eq:y_term}
\end{equation}
Comparing Eqs.~\eqref{eq:i_term}~-~\eqref{eq:y_term} we obtain the following link between SE and CCSE:
\begin{proposition}[Existence of Conjectures]
    Given a SE $x^s$, if each player's best-response is differentiable, then we can always find a set of conjectures to reach $x^s$.
\end{proposition}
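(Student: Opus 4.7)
The plan is to construct the conjectures directly from the best responses themselves, so that the first-order conditions of the CCSE collapse onto those of the SE. Concretely, given a SE $x^\star=(x^*,y^*)$ at which each player's best response $x_j^*(x_{-j})$ and the follower's best response $y^*(x)$ are differentiable, I will propose, for every $i\in\mathcal{N}$ and $j\neq i$, the choice
\begin{equation*}
\gamma_i^j(x_i)\defeq x_j^*(x_i,x_{-i,j}^*),\qquad \gamma_i^y(x_i)\defeq y^*(x_i,x_{-i}^*),
\end{equation*}
that is, the partial best responses obtained by freezing the other leaders' actions at their SE values. These are well-defined and differentiable at $x_i^*$ by hypothesis, so Assumption~\ref{ass:conj} can be met at least locally (one can invoke a smooth extension of the local Taylor expansion to recover global twice-differentiability if needed).

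Next, I would verify that this choice evaluates correctly at the SE. On values, $\gamma_i^j(x_i^*)=x_j^*(x^*)=x_j^*$ and $\gamma_i^y(x_i^*)=y^*(x^*)=y^*$, so the CCSE argument $(x_i^*,\gamma_i^{-i}(x_i^*),\gamma_i^y(x_i^*))$ coincides with $x^\star$. On derivatives, differentiating the partial best responses gives $\nabla\gamma_i^j(x_i^*)=\nabla_i x_j^*(x_{-j}^*)$ and $\nabla\gamma_i^y(x_i^*)=\nabla_i y^*(x^*)$, which are precisely the consistency conditions of Proposition~\ref{prop:ccse}.

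Plugging these identities into the CCSE gradient expression in Eq.~\eqref{eq:ccse_grad} and comparing to the SE expression in Eq.~\eqref{eq:se_grad} yields, at the point $x^\star$,
\begin{equation*}
\nabla f_i(x^\star)\bigl|_{\text{CCSE}}=\nabla_i f_i(x^\star)+\nabla_{-i}f_i(x^\star)\nabla_i x_{-i}^*+\nabla_y f_i(x^\star)\nabla_i y^*=\nabla f_i(x^\star)\bigl|_{\text{SE}}=0,
\end{equation*}
where the last equality uses the SE first-order optimality. By Assumptions~\ref{ass:obj}--\ref{ass:follower_game}, the convexity of $f_i$ and the uniqueness of the follower's response promote this stationarity to the required equilibrium conditions, so $x^\star$ is a CCSE for the chosen conjectures.

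The substantive step is the value-plus-gradient matching at a single point, which by construction is immediate once the best responses are differentiable; the only genuine delicacy is regularity. If one insists on globally twice-differentiable conjectures (Assumption~\ref{ass:conj}), the extension must be carried out off a neighborhood of $x_i^*$ without spoiling the two-sided matching, which can be handled with a smooth cutoff or, more transparently, by a quadratic surrogate $\gamma_i^j(x_i)=x_j^*+\nabla_i x_j^*(x_{-j}^*)(x_i-x_i^*)$ (and analogously for $\gamma_i^y$), which is affine, trivially $C^\infty$, and satisfies both conditions at $x_i^*$. This also clarifies that the affine conjecture class introduced in Section~\ref{sec:conj_stack} is already rich enough to realize any differentiable SE as a CCSE, which I would state as a short remark after the proof.
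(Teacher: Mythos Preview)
Your proof is correct and follows exactly the same idea as the paper's own proof, which is the single sentence ``It suffices to choose the best responses as the conjectures.'' You have simply fleshed out the verification (value and derivative matching, the affine surrogate for global regularity) that the paper leaves implicit.
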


\begin{proof}
It suffices to choose the best-responses as the conjectures.
\end{proof}

We can further analyze Eqs.~\eqref{eq:i_term}~-~\eqref{eq:y_term} from the point of view of the objective functions, obtaining two interesting cases:
\begin{enumerate}
    %\item \textbf{Perfect Conjectures:} if the conjectures are equal to the best-response functions, i.e., $\gamma_i^{-i}(x_i^{\gamma})=x_{-i}^*$ and $\gamma_i^y(x_i^{\gamma})=y^*$, $\forall i \in \mathcal{N}$, SE and CCSE coincide. %This intuitively clearly makes sense, but this way it is rigorously obtained as a result.
    \item \textbf{Linear Objective Functions:} Given linear objectives functions for the leaders $f_i(x,y) = a_ix_i+\sum_{j\neq i}b_{i,j}x_j+c_iy$, we observe that all the second-order derivatives in Eqs.~\eqref{eq:i_term}~-~\eqref{eq:y_term} vanish, meaning that SE and CCSE coincide. This means that the full derivative of the conjectural game for leader $i$ becomes:
    \begin{equation*}
        \nabla_i f_i(x,y) = a_i + \sum_{j\neq i}b_{i,j}\nabla_i\gamma_i^j(x_i^{\gamma}) + c_i\nabla_i\gamma_i^y(x_i^{\gamma}),
    \end{equation*}
    which coincides with the SE definition, given the conditions of Proposition~\ref{prop:ccse} on the conjectures.
    \item \textbf{Quadratic Objective Functions:} Consider quadratic objective functions such as: $f_i(x,y)=\boldsymbol{z}A^i\boldsymbol{z}^{\top}$, where $\boldsymbol{z}=(x,y)$ and $A^i$ is the coefficients matrix for leader $i$, with dimensions $(m+m_y)\times(m+m_y)$. In Eqs.~\eqref{eq:i_term}~-~\eqref{eq:y_term} all the second derivatives become constant terms, i.e., the term $\nabla_i^2f_i(m)$ becomes $A_{i,i}^i$, the term $\nabla_{y,i}f_i(m)$ becomes $A_{i,y}^i$, where the subscripts $i,j$ are the row and column indexes of the matrix $A$. This implies that the difference between the equilibria depends solely on how close the conjectures at a CCSE are to the other players' best responses at a SE.
\end{enumerate}
%These results highlight the large range of choices for conjecture design, which can theoretically reach a CCSE and in some cases, a SE. However, 
Note that the conjecture parameters are generally not known and need to be learned. To that purpose, we propose a two-stage algorithm in the next section.%To that purpose, the algorithm we propose is split into training and learning phases.

\section{The \texttt{COSTAL} Algorithm}
\label{sec:algo}
In our two-stage algorithm, we first train the leaders' conjecture models, and then let the players play the Stackelberg game relying on the learned conjectures.

\subsection{Training Conjectures}
\begin{algorithm}[tb]
\caption{Pseudo-code for training the conjectures.}
\label{alg:conj}
\begin{algorithmic}[1] %[1] enables line numbers
\State \textit{Input}: $\sigma, T, |\mathcal{B}|$
\For{$t\in[0,T]$}
\State Sample $x^t\in\mathcal{X}, y^t\in\mathcal{Y}, \xi\sim\mathcal{N}(0,\sigma)$
\State Compute $\tilde{y}(x^t) = y^*(x^t)+\xi$
\For{$i\in\mathcal{N}$}
\State Sample $\xi\sim\mathcal{N}(0,\sigma)$
\State Compute $\tilde{x}_i(x_{-i}^t,y^t) = x_i^*(x_{-i}^t,y^t)+\xi$
\EndFor
\For{$i \in\mathcal{N}$}
\State Save $(x_i^t,\tilde{y})$ in $\mathcal{D}_i^y$
\For{$j\in\mathcal{N}, j\neq i$}
\State Save $(x_i^t,\tilde{x}_j)$ in $\mathcal{D}_i^j$
\EndFor
\EndFor
\EndFor
\State Train the conjectures on the data sets $\mathcal{D}_i^j,\mathcal{D}_i^y,\forall i,j\in\mathcal{N}$
\State \textit{Output}: trained conjectures $\gamma_i^j,\gamma_i^y,\forall i,j\in\mathcal{N}$
\end{algorithmic}
\end{algorithm}
Before the game is played, a training phase takes place during which the leaders collect data about the other players. This can be seen as a similar process to \textit{cheap talk} \cite{farrell_talk_1995}, or Generative Adversarial Networks \cite{farnia_gans_2020}, where a pre-trained discriminator is used.
At each step of the training phase we sample uniformly a different decision variable for each leaders, then the best response for each player is computed, considering the randomly sampled variables and adding noise: $\forall i \in \mathcal{N},\,\tilde{x}_i = x_i^*(x_{-i},y) + \xi $ with $\xi\sim\mathcal{N}(0,\sigma)$. The tuples $(x_i,\tilde{x}_j),\forall j\in \mathcal{N}\setminus\{i\}$, and $(x_i,\tilde{y}),\,\forall i \in \mathcal{N}$, where $\tilde{y}$ is the noisy follower's reaction to the strategy of the leaders, are stored in datasets $(\mathcal{D}_i^j)_{i,j}$ and $(\mathcal{D}_i^y)_i$. This process is repeated $T$ times, creating for each player a dataset of size $T$ for every other player in the game.
The next step is then to update the conjectures to learn the other players' behaviors over these data streams using stochastic gradient descent with batch size $|\mathcal{B}|$, and with a quadratic loss defined as follows:
\begin{equation*}
    L_i^j(\gamma_i^j) \defeq \frac{1}{|\mathcal{B}|}\sum_{b\in\mathcal{B}\subset \mathcal{D}_i^j}\left(x_j^b-\gamma_i^j(x_i^b)\right)^2, \forall i,j \in \mathcal{N},\, j\neq i.
\end{equation*}
The algorithm to learn the players' conjectures is described in Algorithm~\ref{alg:conj}. 
In the sampling part (lines 3-7), the operations scale as $T\cdot N$, thus it is linear in the number of agents. In the training part, we have to train $N\cdot(N-1)$ conjecture functions, thus it scales quadratically with the number of agents. Notice that this part could be run in parallel for each agent (or by groups of agents), if $N$ becomes large.

\subsection{Learning Conjectural Stackelberg Equilibria}
\label{sec:proof}
\begin{algorithm}[tb]
\caption{Pseudo-code for learning a Conjectural Stackelberg Equilibrium.}
\label{alg:strat}
\begin{algorithmic}[1]
\State \textit{Input}: $T, \eta, \{\gamma_i^j,\gamma_i^y,\forall i,j\in\mathcal{N}\}, \{f_i(\cdot), \forall i\in\mathcal{N}\}$
\State Initialize $x^0\in\mathcal{X}$
\For{$t\in[0,T]$}
\For{$i\in\mathcal{N}$}
\State Compute $D_i = \nabla_if_i$
\State Compute $D_{-i} = \nabla_{-i}f_i\nabla_i\gamma_i^{-i}$
\State Compute $D_y = \nabla_yf_i\nabla_i\gamma_i^y$
\State $x^{t+1}_i = x_i^t - \eta(D_i+D_{-i}+D_y)$
\EndFor
\EndFor
\State \textit{Output}: Final strategies $\{x_i^T\;\forall i\in\mathcal{N}\},y^*(x^T)$
\end{algorithmic}
\end{algorithm}

The leaders participating in the Stackelberg game update their strategies through gradient descent. Considering the continuous counterpart of our system, the players update their strategies as follows:
\begin{subequations}
    \begin{align}
        &\dot{x}_i = -\nabla f_i(x_i,x_{-i},y),\;\forall\;i\in\mathcal{N},\\
        &\dot{y} = -\nabla_y g(x,y).
    \end{align}
    \label{eq:std_system}
\end{subequations}
The main issue with this formulation is that, since all the leaders are coupled, computing the full gradient for each of them is not feasible. 
For the settings we consider, the presence of conjectures helps at rewriting this system in a simpler way. Substituting the conjectures in  the objective function of leader $i$, $f_i$, we obtain $f_i(x_i,(\gamma_i^j(x_i))_{j\neq i},\gamma_i^y(x_i))$, which depends only on $x_i$. We define $\Tilde{f}_i(x_i) \defeq f_i(x_i,(\gamma_i^j(x_i))_{j\neq i},\gamma_i^y(x_i))$, with $\Tilde{f}_i$ the conjectured objective function. Eqs.~\eqref{eq:std_system} can then be rewritten as follows:
\begin{subequations}
    \begin{align}
        &\dot{x}_i = -\nabla_i \tilde{f}_i(x_i),\;\forall\;i\in\mathcal{N}, \label{eq:leader_cont}\\
        &\dot{y} = -\nabla_y g(x,y).
    \end{align}
    \label{eq:conj_system}
\end{subequations}
We now explicit $\nabla_i\tilde{f}_i$:
\begin{equation}
\begin{split}
        \nabla_i\tilde{f}_i(x_i) = &\nabla_if_i(x_i,\boldsymbol{\gamma}_i^{-i},\gamma_i^y) + \nabla_{-i}f_i(x_i,\boldsymbol{\gamma}_i^{-i},\gamma_i^y)\nabla_i\boldsymbol{\gamma}_i^{-i}\\
        +&\nabla_yf_i(x_i,\boldsymbol{\gamma}_{-i},\gamma_i^y)\nabla_i\gamma_i^y.
\end{split}
\label{eq:cont_grad}
\end{equation}
The key feature of Eq.~\eqref{eq:cont_grad} is that it is easily obtained once the conjectures are given. Furthermore, a convergence guarantee for the system described in Eqs.~\eqref{eq:conj_system} can be obtained following  \cite{fiez_implicit_2020} and \cite{borkar_stochastic_2008}. Specifically, we can write the discrete version of this update rule for leader $i$:
\begin{equation}
    x_i^{t+1} = x_i^{t} - \eta^{t}(\nabla_i\tilde{f}_i(x_i^t)+\zeta_i^t),
    \label{eq:disc_update}
\end{equation}
where we also consider the noise $\zeta$ deriving from the fact that the leader has access to an unbiased estimate of the gradient. This allows us to use well-known results from stochastic approximation theory \cite{borkar_stochastic_2008}, from which we also state the following assumptions:
\begin{assumptions}[{ \cite[Sec. 2.1, Assumptions A1-A2-A3]{borkar_stochastic_2008} }]
\label{ass:lipschitz}
For all leader $i \in \mathcal{N}$, $\nabla_i\tilde{f}_i$ is Lipschitz continuous and $\|\nabla_i\tilde{f}_i\|<+\infty$, $\{\zeta_i^t\}$ is a martingale difference sequence, and the gradient step sizes are chosen such that $\sum_t \eta^t=\infty, \, \sum_t(\eta^t)^2<+\infty$.
\end{assumptions}

\begin{theorem}
    Suppose Assumptions~\ref{ass:lipschitz} hold, then $\{(x_i^t)_i\}_t$ converges almost surely to a local CSE $(x_i^{\gamma})_i$ of Eqs.~\eqref{eq:conj_system}.
\end{theorem}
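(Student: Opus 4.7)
The strategy is to apply the ODE method for stochastic approximation, in the form stated in \cite[Thm.~2, Ch.~2]{borkar_stochastic_2008}, to the update rule in Eq.~\eqref{eq:disc_update}, and then identify the limit points of the associated ODE with local CSEs via the differential equilibrium definition given in Sec.~\ref{sec:conj_stack}.

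First, I would exploit the decoupling induced by the conjectures. Because $\tilde{f}_i$ depends only on $x_i$ after substitution of $\gamma_i^{-i}$ and $\gamma_i^y$, the scheme in Eq.~\eqref{eq:disc_update} is an autonomous stochastic approximation in $x_i$ alone, so each leader's iterates can be analyzed independently. The hypotheses to invoke Borkar's theorem are then furnished directly by Assumptions~\ref{ass:lipschitz}: Lipschitz continuity and uniform boundedness of the drift $\nabla_i \tilde{f}_i$, the martingale difference property of $\{\zeta_i^t\}$ (with bounded second moment, which follows from compactness of $\mathcal{X}_i$ and Assumption~\ref{ass:obj}), and the Robbins--Monro step-size conditions $\sum_t \eta^t = \infty$, $\sum_t (\eta^t)^2 < \infty$. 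Boundedness of the trajectories is automatic from the compactness of $\mathcal{X}_i$ in Assumption~\ref{ass:obj}.

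Second, Borkar's theorem then implies that, almost surely, the interpolated trajectories of $\{x_i^t\}_t$ track the solutions of the limiting ODE
\begin{equation*}
    \dot{x}_i = -\nabla_i \tilde{f}_i(x_i),
\end{equation*}
which is exactly Eq.~\eqref{eq:leader_cont}. Since this is a gradient flow, $\tilde{f}_i$ serves as a Lyapunov function: it decreases along trajectories and is bounded below on the compact set $\mathcal{X}_i$. By LaSalle's invariance principle, every $\omega$-limit point satisfies $\nabla_i \tilde{f}_i(x_i^\gamma) = 0$. Combining this condition across all leaders, together with $\nabla_y g(x^\gamma, y^\gamma) = 0$ from the follower's best response (guaranteed by Assumption~\ref{ass:follower_game}), yields precisely the first-order conditions of the differential equilibrium definition for a CSE. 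Hence the limit point $(x_i^\gamma)_i$ is a local CSE in the sense of Definition~\ref{def:cse}.

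The main obstacle I expect is the standard subtlety in this chain of arguments: LaSalle guarantees convergence to the set of critical points, but not to a single critical point. For gradient flows of $C^d$ functions on a compact set, convergence to a connected component of the critical set follows from the descent property; pinning down a single local CSE (rather than a set) would, if needed, require invoking \L{}ojasiewicz-type inequalities or isolated-critical-point assumptions. A further, lighter, subtlety is ruling out convergence to saddle points; this is usually handled by the stochastic saddle-avoidance arguments (e.g. Pemantle, Brandière--Duflo) that are used in the same spirit in \cite{fiez_implicit_2020}, and which apply verbatim here because the noise $\zeta_i^t$ is assumed sufficiently exciting in all directions.
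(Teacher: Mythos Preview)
Your proposal is correct and follows essentially the same route as the paper: apply Borkar's ODE method to the decoupled stochastic approximation in Eq.~\eqref{eq:disc_update}, then identify the limit set of the gradient flow in Eq.~\eqref{eq:leader_cont} with local CSEs via the differential equilibrium conditions. The only cosmetic difference is that the paper phrases the final identification step via internally chain transitive sets \cite[Thm.~2, Cor.~4]{borkar_stochastic_2008} and invokes the second-order condition $\nabla_i^2\tilde f_i>0$ directly to exclude non-CSE limits, whereas you reach the same conclusion through the Lyapunov/LaSalle gradient-flow argument and an explicit saddle-avoidance remark; your acknowledgment of the set-versus-single-point subtlety is in fact more careful than the paper's own treatment.
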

\begin{proof}
    We consider the differential CSE $(x^{\gamma},y^{\gamma})$, given the set of conjectures $\{(\gamma_i^j(x_i))_{j\neq i},\gamma_i^y(x_i);i\in\mathcal{N}\}$.
    From the differential version of Definition~\ref{def:cse}, we know that for any leader $i$, $\nabla_i \tilde{f}_i(x_i^{\gamma}) = 0$ and $\nabla_i^2\tilde{f}_i(x_i^{\gamma}) > 0$, meaning that it is a stable point for the differential equation: $\dot{x}_i = -\nabla_i \tilde{f}_i(x_i)$. %and this is valid for all leaders $i\in\mathcal{N}$.
    The discrete update $x_i^{t+1} = x_i^{t} - \eta^{t}(\nabla_i\tilde{f}_i(x_i^t)+\zeta_i^t)$ is then a stochastic approximation of the continuous process, tracking the ODE asymptotically. Given the assumptions we described in the statement of the theorem and using \cite[Theorem 2, Corollary 4]{borkar_stochastic_2008}, the discrete sequence of each leader $i$ converges almost surely to a compact internally chain transitive set of the ODE. This is a a closed compact set $\mathcal{A}\subset \mathcal{X}_i$ such that: for any trajectory $x_i(t)$ with $x_i(0)\in\mathcal{A}$, $x(t)\in\mathcal{A}\,\forall t\geq 0$; and for any $x^a,x^b\in\mathcal{A}, \epsilon > 0, T>0$, there exists $n\geq1$ and $x^a=x^0,x^1,\cdots,x^n=x^b$ such that the trajectory of the ODE starting at $x^l, 0\geq l>n$, reaches the $\epsilon$-neighbourhood of $x^{l+1}$ after a time $\geq T$. Finally, since any stable attractor of the dynamic has to satisfy $\nabla_i^2\tilde{f}_i > 0$ for all leaders \cite{strogatz_nonlinear_2015}, and the followers always replies with the best response, by definition the only internally chain transitive sets are CSEs. We can then conclude that the algorithm using Eq.~\eqref{eq:disc_update} converges a.s. to a CSE of Eqs.~\eqref{eq:conj_system}.
\end{proof}

In Algorithm~\ref{alg:strat} we report the structure of the algorithm in pseudo-code. Notice that the update of the strategy of player $i$ (lines 5-8), obtained from Eq.~\eqref{eq:cont_grad}, only depends on variables of player $i$. This means that the loop over all the players $\mathcal{N}$ could be run in parallel instead than sequentially, speeding up \texttt{COSTAL} for large number of agents. Furthermore, this update does not even need to be run on the same machine, allowing full decentralization for the leaders. 
Note that \texttt{COSTAL} may never reach a global solution of Eqs.~\eqref{eq:std_system}. There exist integrated approaches based on the reformulation of the Stackelberg games as an equivalent mathematical program with constraints \cite{dempe_bilevel_2020} to compute global solutions, but they do not allow to account for the learning capabilities of the players. Further, reaching a global equilibrium can be expensive form a computational point of view, thus focusing on a local equilibrium might be preferable. In addition, from a system point of view, the adjustments of the players' strategies are usually small, meaning that it makes sense to study local solutions \cite{zheng_stackelberg_2021}. 
From \cite{bubeck_convex_2015}, we know that the convergence rate of the gradient descent given a smooth function, which we already require in Assumptions~\ref{ass:lipschitz}, is $O(1/T)$.
Note that the loop over agents in Algorithm~\ref{alg:strat} (lines 4-9) includes an inner loop (line 6), which runs over the $N-1$ remaining agents. This entails that the operations for each gradient step scale as $O(N^2)$. On the other hand, we mentioned that the gradient update can be run in parallel for all agents, bringing back the scaling to a linear behavior.

\section{Experiments}
\label{sec:experiments}
We present numerical results for the \texttt{COSTAL} algorithm applied to two stylized games; for both games, we report additional data and a full script to reproduce the experiments in the Appendix. We focus on comparing the learned strategies with known equilibria, to check whether a CSE could benefit the players of the game, and if so, with which conjectures. In particular we compare our algorithm against a naive gradient descent update, with full access to the other players' variables, described by:
\begin{equation}
    x_i^{t+1} = x_i^t - \eta\nabla_if_i(x_i^{t},x_{-i}^t,y^t),\,\forall t.
    \label{eq:GD}
\end{equation}
We will refer to this benchmark algorithm in the next sections as \texttt{GD}. Note that this method is not decentralized and cannot be run in parallel, as every update needs all the players' variables.

\subsection{Leader's Dilemma}
\begin{figure}[t]
\begin{subfigure}[t]{.32\linewidth}
    \includegraphics[width=\linewidth]{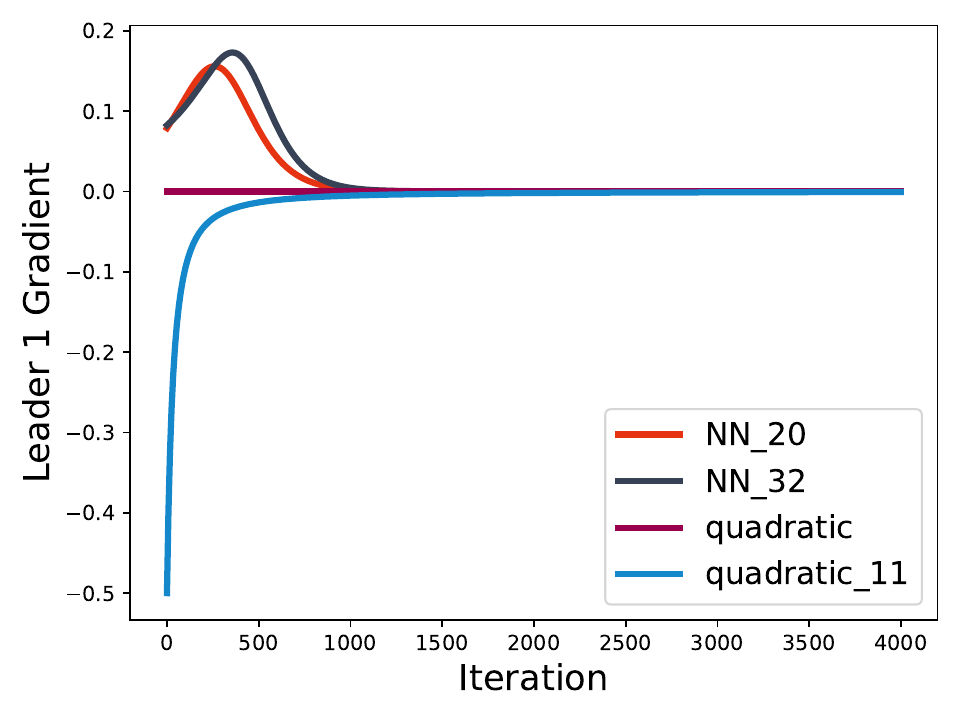}
\caption{Evolution of the gradient for leader 1.}
\label{fig:grad_x1_zonal}
\end{subfigure}
\hfill
\begin{subfigure}[t]{.32\linewidth}
    \includegraphics[width=\linewidth]{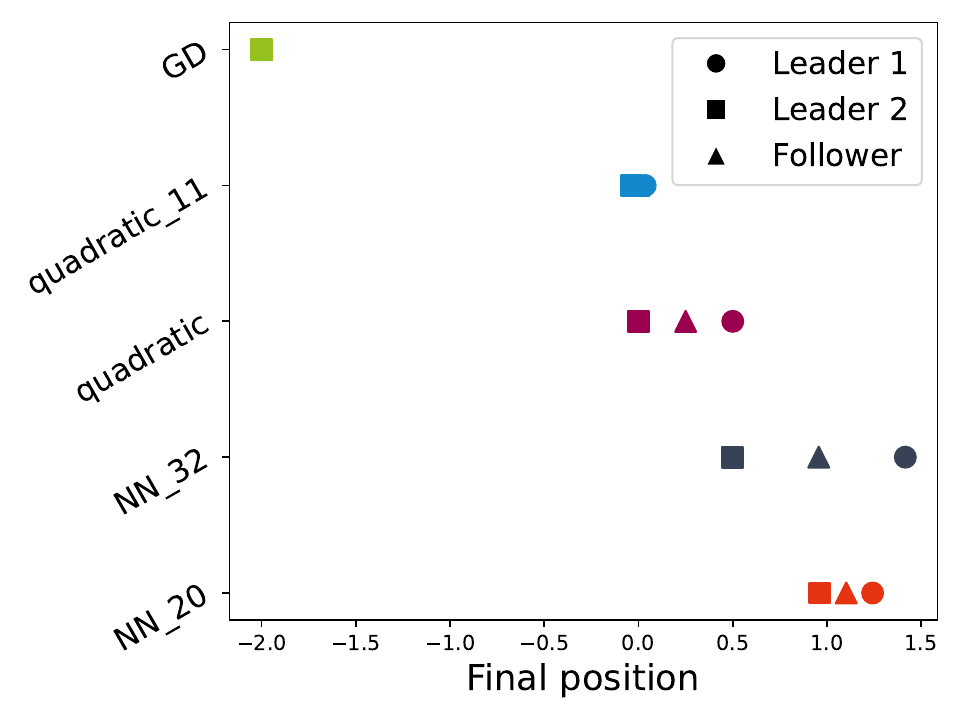}
\caption{Final position for all players.}
\label{fig:final_pos_zonal}
\end{subfigure}
\hfill
\begin{subfigure}[t]{.32\linewidth}
    \includegraphics[width=\linewidth]{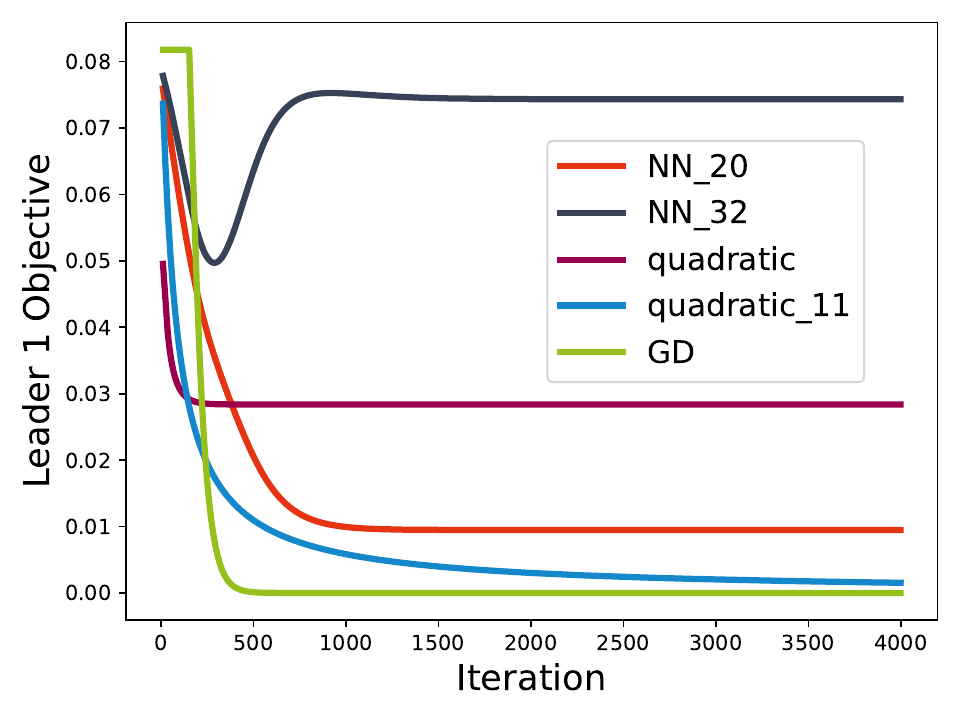}
    \caption{Evolution of the objective of leader 1.}
\label{fig:obj_x1_zonal}
\end{subfigure}
\caption{Results of the simulation for the game "Leader's Dilemma" with $K=-1.5$, we report only the results for one leader as the game is symmetric.}
\label{fig:zonal}
\end{figure}
We consider here a Stackelberg game where two leaders should try to cooperate by getting close to a target set by the follower. The objective function for the leaders are:
\begin{equation}
    f_i(x_i,x_j,y) = -(x_i-y)^2-K\left(1-e^{-(x_j-y)^2}\right),\,\forall i\in\{1,2\},
\end{equation}
and for the follower:
\begin{equation}
    g(x,y) = \left(\frac{x_1+x_2}{2} - y\right)^2.
\end{equation}
The leaders maximize their objective function, while the follower minimizes it, all variables are in the action space $[-2,2]$ and $K<-1$. It is easy to prove that the strategy $x_1=x_2$ for the leaders is a saddle point, thus they could obtain better payoffs by staying far from the target. In Fig.~\ref{fig:grad_x1_zonal}\footnote{The labels in the plots refer to which type of conjecture is used: NN stands for neural network, with the number referring to the size of the hidden layer; for the quadratic conjecture, just `quadratic' refers to $\gamma = x^2$, while `quadratic$\_$11' is $\gamma = x^2 + x$.} we check convergence speed of the algorithm: the leaders' gradient converges to 0 rather quickly for each class of conjectures, as we expected from our theoretical analysis. Furthermore, in Fig.~\ref{fig:obj_x1_zonal} we plot the leaders' objective function at equilibrium: we can see that the \texttt{GD} stays trapped in the saddle point, the two leaders stay in the same position, obtaining 0 as reward; the \texttt{COSTAL} algorithm with the different conjectures instead, finds stable solutions that guarantee a reward larger than 0, beating the \texttt{GD}. Finally, in Fig.~\ref{fig:final_pos_zonal}, we plot the final solutions reached in each case. We infer that the conjectures allow the leaders to find an equilibrium which is at a distance from the target set by the follower, while the \texttt{GD} method cannot.
We present more numerical results to give an intuition on why this happens in the Appendix.

\subsection{Revisiting Olsder's Paradox}
For our second experiment we take a game described in \cite{olsder_phenomena_2009} (players are maximizing). The game considers two players with action space $[0,+\infty]$ and with objective functions:
\begin{equation*}
    f_1(x_1,x_2) = (x_1-84)(-12.5x_1+21x_2+756),
\end{equation*}
\begin{equation*}
    f_2(x_1,x_2)= (x_2-50)(25x_1-50x_2+560).
\end{equation*}
This game can be interpreted as a paradox because it leads to a CCE with affine conjectures, which is more efficient than the NE and the SE (with player 1 as leader, and player 2 as follower) solutions of the game. This result is counter-intuitive, because it means that being bounded rational can be more profitable for the players and more efficient for a system point of view than being fully rational. The numerical results are reported in Tab.~\ref{tab:equilibria}, where we also highlight the social welfare optimum for the game, which is reached by maximizing the sum of the objective functions. For the numerical simulations, we consider two settings: simultaneous play (\texttt{N} at the front of the name in the graphs) and Stackelberg play (\texttt{S}), with player 1 using the algorithm as the leader, while player 2 replies with the best response. In Fig.~\ref{fig:olsder} we report the values reached by the objective function of player 1 and 2 respectively. We can see how using the \texttt{COSTAL} algorithm produces results that always beat the NE, and for player 1 in the simultaneous case we even achieve a better result than the SE. The CCE is far more profitable for the players and more efficient than the NE and the SE. However, this result needs to be mitigated in practice, because a large amount of information is needed to compute a CCE: relying on Proposition~\ref{prop:ccse}, player 1 would need to have access to player 2's objective function, together with its first and second-order derivatives. However, it is worth noticing that \texttt{COSTAL} enables to reach CSE without any information exchange among the players, achieving better performance (players' objective values, efficiency) than the NE and the SE. Furthermore, under the hierarchical play setting, we can observe that Proposition~\ref{prop:bound} holds: both players' objective functions are upper bounded by the SE.
\begin{table}[h]
    \centering
    \begin{tabular}{c c c c c}
        \toprule
         Solution &  $x_1$ & $x_2$ & $f_1$ & $f_2$\\
         \midrule
         \textbf{CCE} & $164.4$ & $81$ & $32320.8$ & $19220$ \\
         \textbf{SE} & $138.04$ & $65.11$ & $21411.6$ & $11415.8$\\
         \textbf{NE} & $123.98$ & $61.6$ & $19979.8$ & $6722.13$\\
         \textbf{SWO} & $300.04$ & $150.98$ & $38141.2$ & $56548.7$\\
         \bottomrule
    \end{tabular}
    \caption{Values at CCE, SE, NE and SWO for the Olsder's game.}
    \label{tab:equilibria}
\end{table}

\begin{figure}
    \centering
    \includegraphics[width=0.6\linewidth]{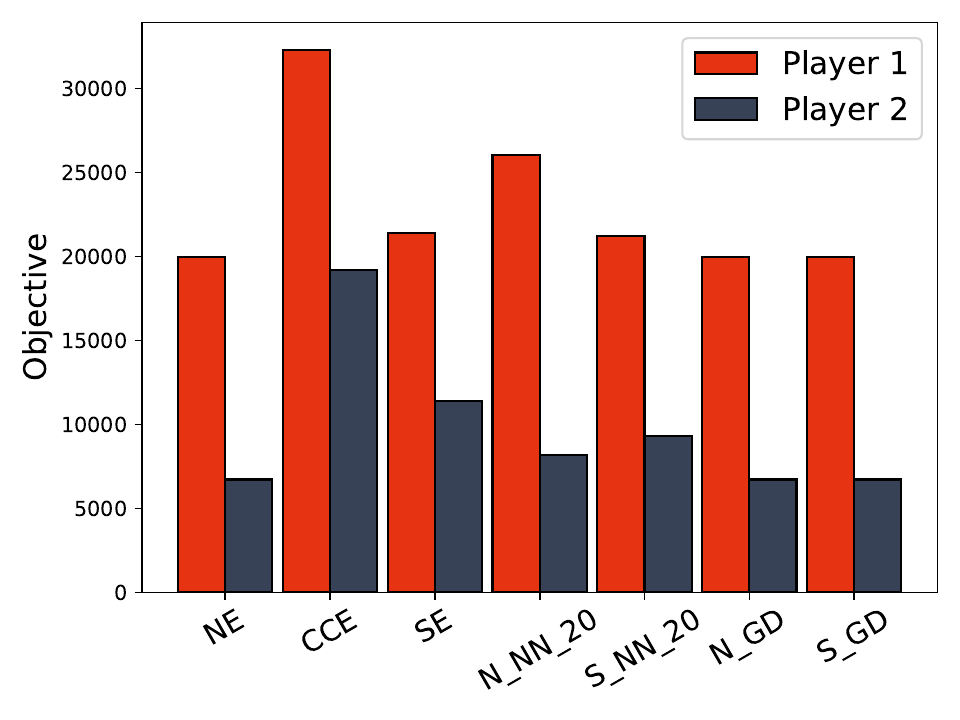}
    \caption{Final value of the objective function of both players of the Olsder's Paradox. }
    \label{fig:olsder}
\end{figure}

\section{Conclusions}
\label{sec:conc}
In this work we presented a new class of games, called Conjectural Stackelberg Games, aimed at bridging the gap between the use of conjectures and Stackelberg games. In particular we formalized how leaders may model opposite players' behavior.
To characterize the game we also propose an equilibrium definition, using the notion of consistency, first introduced in Conjectural Variations Games. As a first effort to study these new games, we compare the Conjectural Stackelberg Equilibrium with the standard Stackelberg Equilibrium for multiple settings.
We also want to stress the fact that we kept our whole formalism \textit{conjecture-agnostic}, meaning that we do not assume a specific form for the conjecture functions, which is in stark contrast with the existing literature.
Finally, we develop a multi agent learning algorithm (\texttt{COSTAL}) that can be used by the players of Conjectural Games to update their conjectures and their strategy. We also provide a proof of convergence for said algorithm.
We believe Conjectural Stackelberg Games could be a useful and interesting class of games to study further, both on the theoretical side, obtaining more results regarding their relation with Nash and Stackelberg equilibria; but also on the algorithmic side, using their structure to develop game-theory informed learning algorithm.

%%%%%%%%%%%%%%%%%%%%%%%%%%%%%%%%%%%%%%%%%%%%%%%%%%%%%%%%%%%%%%%%
%% Bibliography
%%%%%%%%%%%%%%%%%%%%%%%%%%%%%%%%%%%%%%%%%%%%%%%%%%%%%%%%%%%%%%%%
\bibliography{main}
\bibliographystyle{rlc}

%%%%%%%%%%%%%%%%%%%%%%%%%%%%%%%%%%%%%%%%%%%%%%%%%%%%%%%%%%%%%%%%
%% Appendices
%%%%%%%%%%%%%%%%%%%%%%%%%%%%%%%%%%%%%%%%%%%%%%%%%%%%%%%%%%%%%%%%
\appendix

\section*{Appendix}
\label{sec:appendix}
We report here further data regarding the simulation discussed in the main paper, focusing on the trajectories of the players during the learning of their strategy and the convergence speed, which we can show empirically looking at the gradient at each step of the algorithm.

A full script to run all the experiments can be found at this \href{https://github.com/FrancescoMorri/COSTAL}{link}.

\paragraph{Leader's Dilemma}
For this game we already reported the gradient evolution and objective evolution for leader 1 in the main paper, so we show here in Fig.~\ref{fig:zonal_x2_grad} and Fig.~\ref{fig:zonal_x2_obj} the same plots but for leader 2, and is clearly visible the gradient quickly goes to 0 and the the results obtained by the \texttt{COSTAL} algorithm are superior to that of the basic \texttt{GD} approach. Furthermore, in Fig.~\ref{fig:zonal_strat} we report the evolution of the strategies of the two leaders, where the different results between our algorithm and the gradient descent method are even more distinct. Finally, in Fig.~\ref{fig:zonal_x1_x2} we report a comparison between the co-evolution of the two leaders and the Stackelberg equilibria of the game. In particular this game has a continuum of equilibria, between which the line $x_1=x_2$ is not stable and it also achieves a lower reward than $x_2=x_1\pm 2\sqrt{\log|K|}$. We can see that the \texttt{GD} method converge to the unstable point, while the equilibria reached by the conjectures are not on the Stackelberg Equilibria line, even though some are getting close. Once more this highlights the fact that the \texttt{COSTAL} algorithm converges to a stable equilibrium that is different from the standard notions of Nash/Stackelberg. A further explanation behind this can be obtained by looking at Fig.~\ref{fig:zonal_conj_obj}, where we plot the objective functions of the two leaders evaluated with the learned conjectures \texttt{NN$\_$32}: we can see that each player has a single global maximum, and they allow the two leaders to find an equilibrium position that would not exist in the original structure of the game.

\begin{figure}[h]
\begin{subfigure}[t]{.48\linewidth}
    \includegraphics[width=\linewidth]{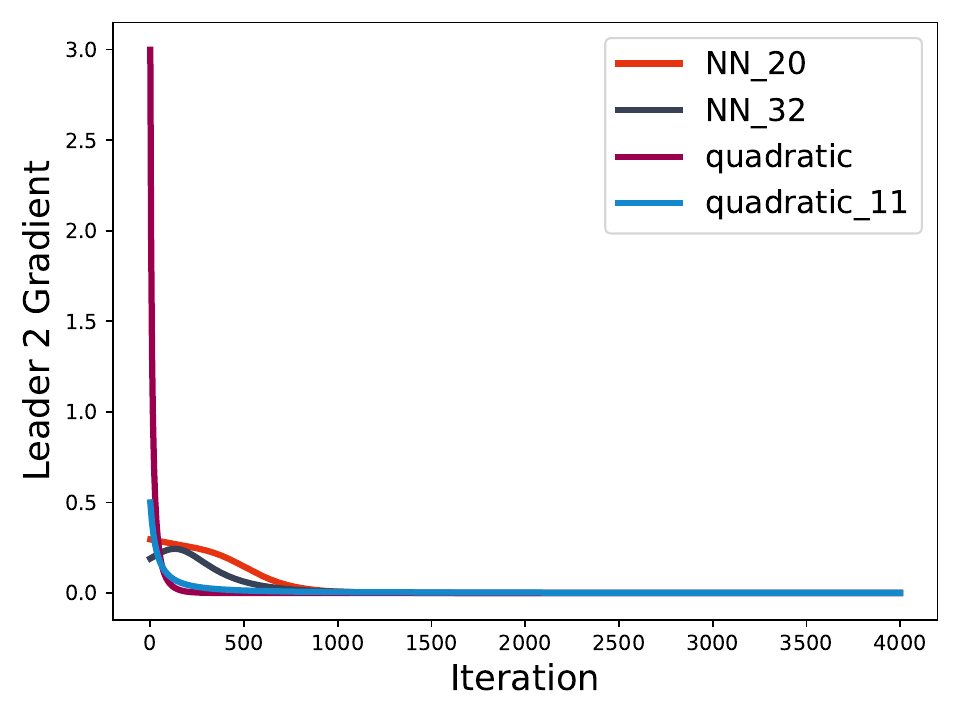}
\caption{Evolution of the gradient of leader 2.}
\label{fig:zonal_x2_grad}
\end{subfigure}
\hfill
\begin{subfigure}[t]{.48\linewidth}
    \includegraphics[width=\linewidth]{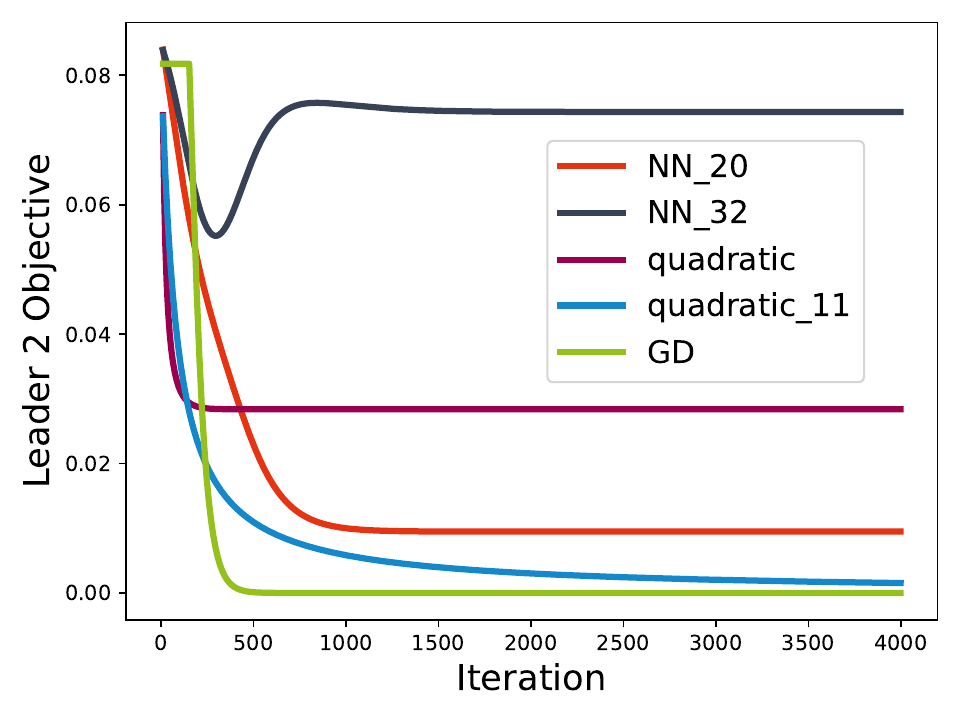}
\caption{Evolution of the objective function of leader 2.}
\label{fig:zonal_x2_obj}
\end{subfigure}
\caption{Gradient and objective evolution of leader 2 in the Leader's Dilemma game.}
\label{fig:zonal_x2}
\end{figure}

\begin{figure}[h]
\begin{subfigure}[t]{.48\linewidth}
    \includegraphics[width=\linewidth]{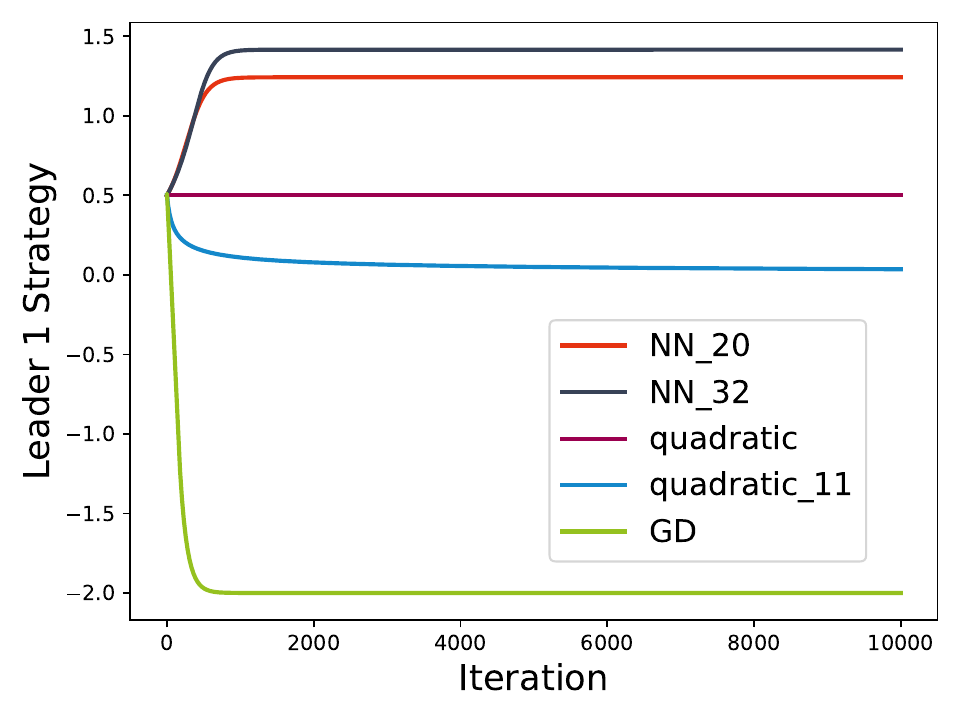}
\caption{Evolution of the strategy of leader 1.}
\label{fig:zonal_x1_evol}
\end{subfigure}
\hfill
\begin{subfigure}[t]{.48\linewidth}
    \includegraphics[width=\linewidth]{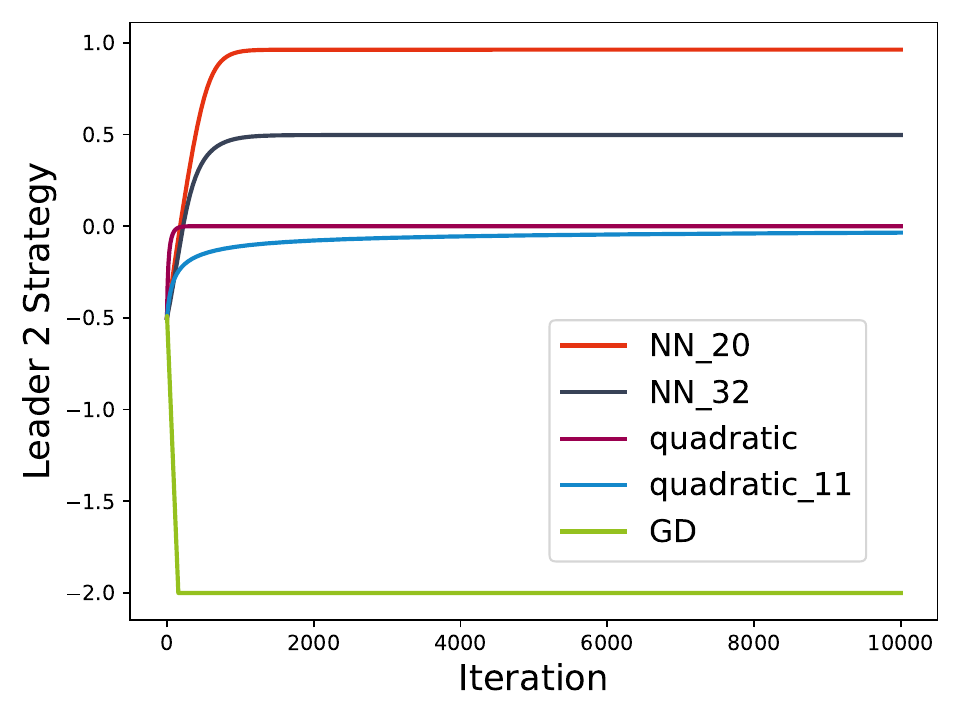}
\caption{Evolution of the strategy of leader 2.}
\label{fig:zonal_x2_evol}
\end{subfigure}
\caption{Plots of the evolution of the strategies of the players of the Leader's Dilemma game.}
\label{fig:zonal_strat}
\end{figure}

\begin{figure}
    \centering
    \includegraphics[width=0.7\linewidth]{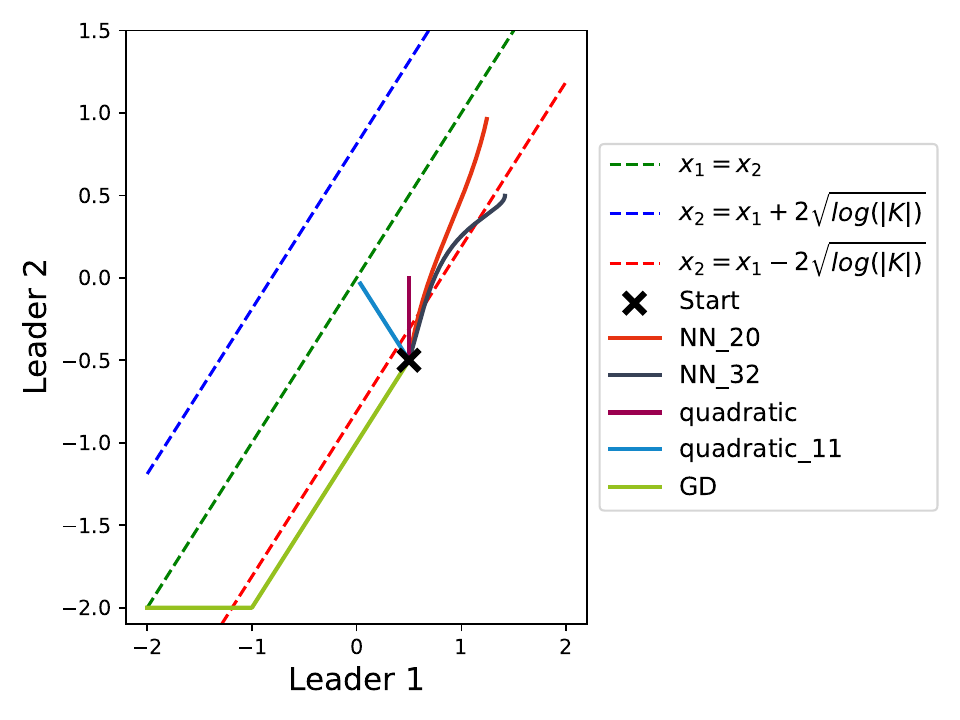}
    \caption{Comparison between the evolution of the strategies of the two leaders and the Stackelberg Equilibria of the Leader's Dilemma game}
    \label{fig:zonal_x1_x2}
\end{figure}

\begin{figure}
    \centering
    \includegraphics[width=0.5\linewidth]{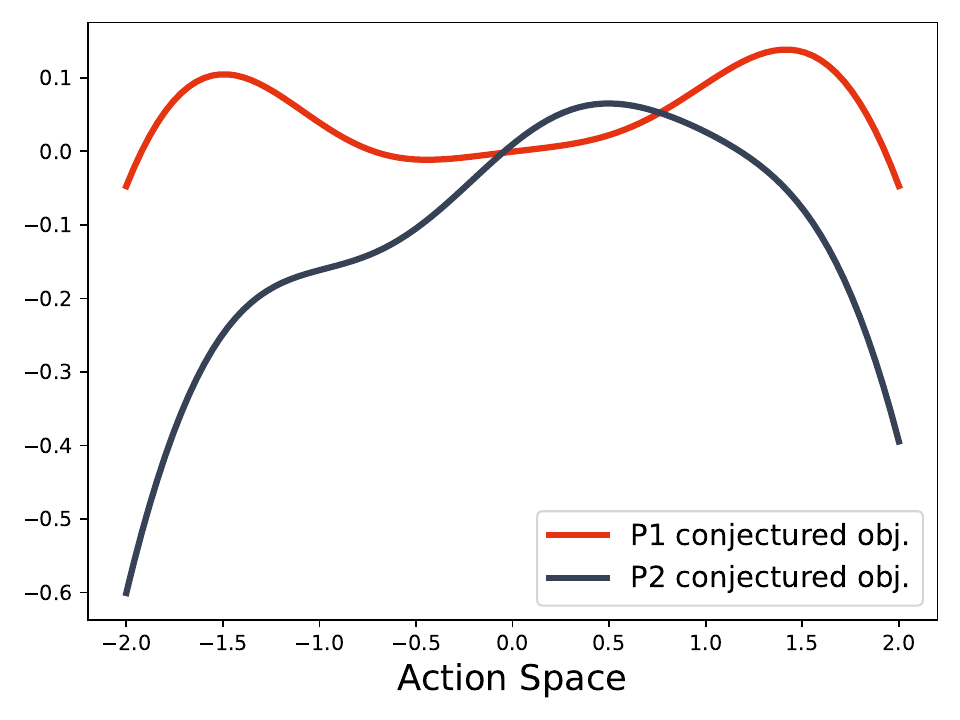}
    \caption{Objective functions of the two leaders evaluated with the learned conjectures \texttt{NN$\_$32}.}
    \label{fig:zonal_conj_obj}
\end{figure}

\paragraph{Revisiting Olsder's Paradox.} We report a similar analysis also for the second game. In Fig.~\ref{fig:olsder_strat} we can see once again how the strategies of the players converge very quickly, in this case also for the \texttt{GD} update. As before we show also the convergence of the gradients in Fig.~\ref{fig:olsder_grad}, where we stress that in the two plots we are only showing the first 500 iterations, to highlight the fact that the gradient does go to 0 extremely quickly in this game. Furthermore, we also report in Fig.~\ref{fig:olsder_obj}, the evolution of the objective functions, comparing them with the value assumed in the different equilibria. Once again we can observe how quickly these functions converge to a stable point, without changing throughout the rest of the simulation.

\begin{figure}[h]
\begin{subfigure}{.48\linewidth}
    \includegraphics[width=\linewidth]{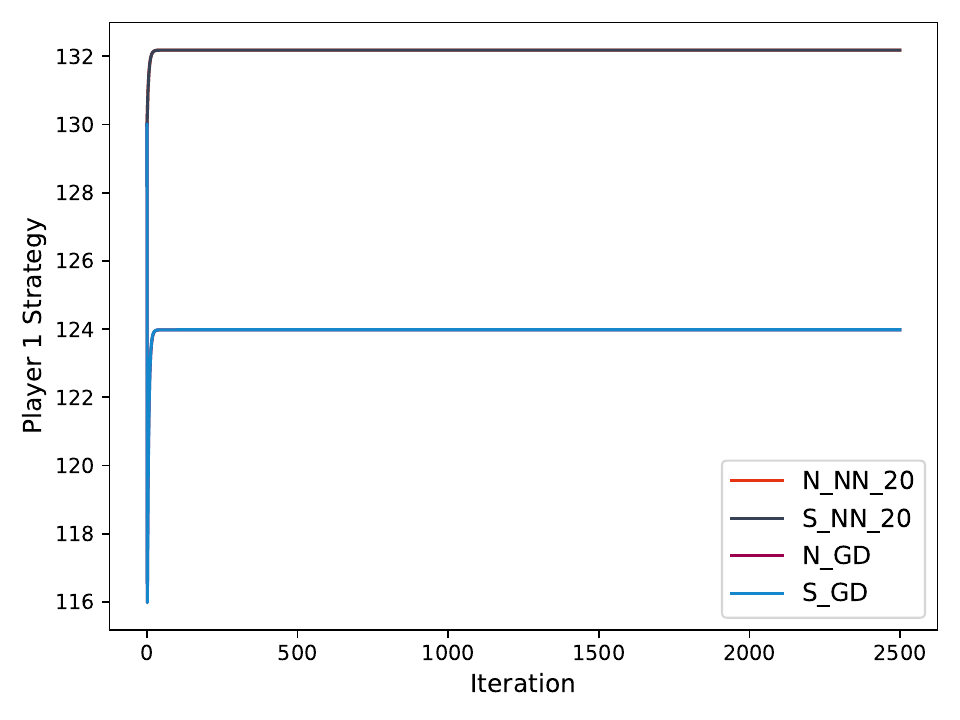}
\caption{Evolution of the strategy of player 1.}
\label{fig:olsder_x1}
\end{subfigure}
\hfill
\begin{subfigure}{.48\linewidth}
    \includegraphics[width=\linewidth]{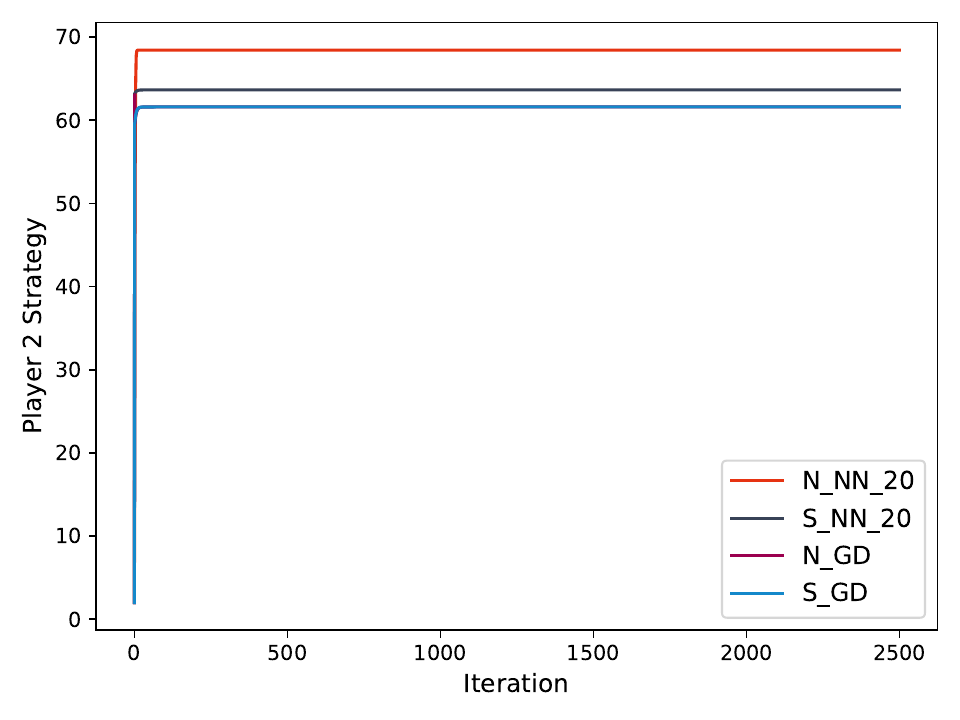}
\caption{Evolution of the strategy of player 2.}
\label{fig:olsder_x2}
\end{subfigure}
\caption{Strategies of the two players throughout the simulation.}
\label{fig:olsder_strat}
\end{figure}

\begin{figure}[h]
\begin{subfigure}{.48\linewidth}
    \includegraphics[width=\linewidth]{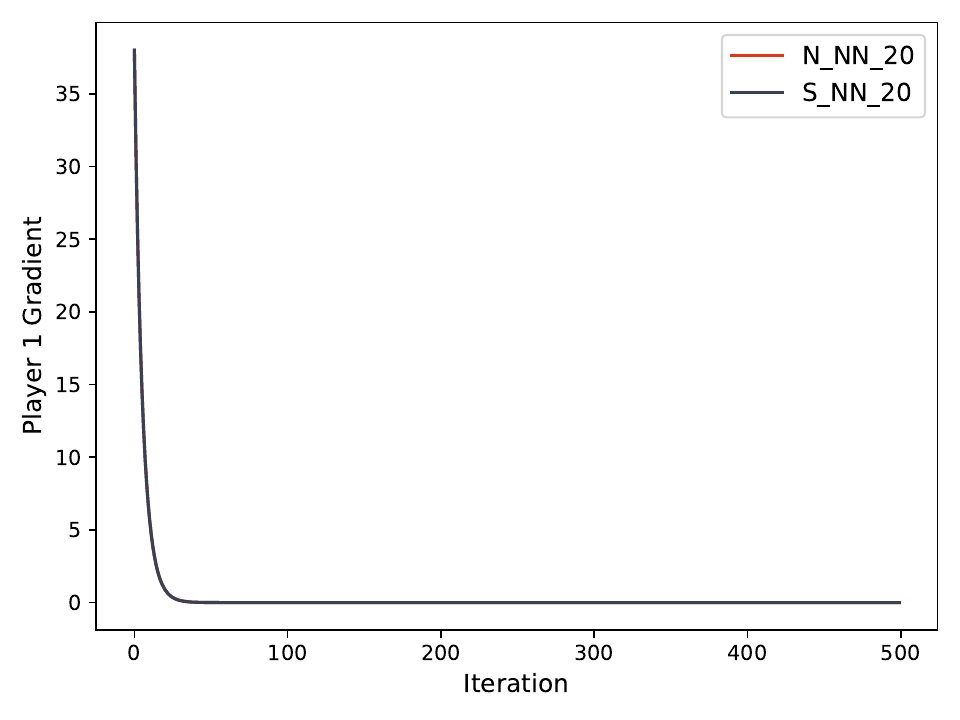}
\caption{Evolution of the gradient of player 1.}
\label{fig:olsder_grad_1}
\end{subfigure}
\hfill
\begin{subfigure}{.48\linewidth}
    \includegraphics[width=\linewidth]{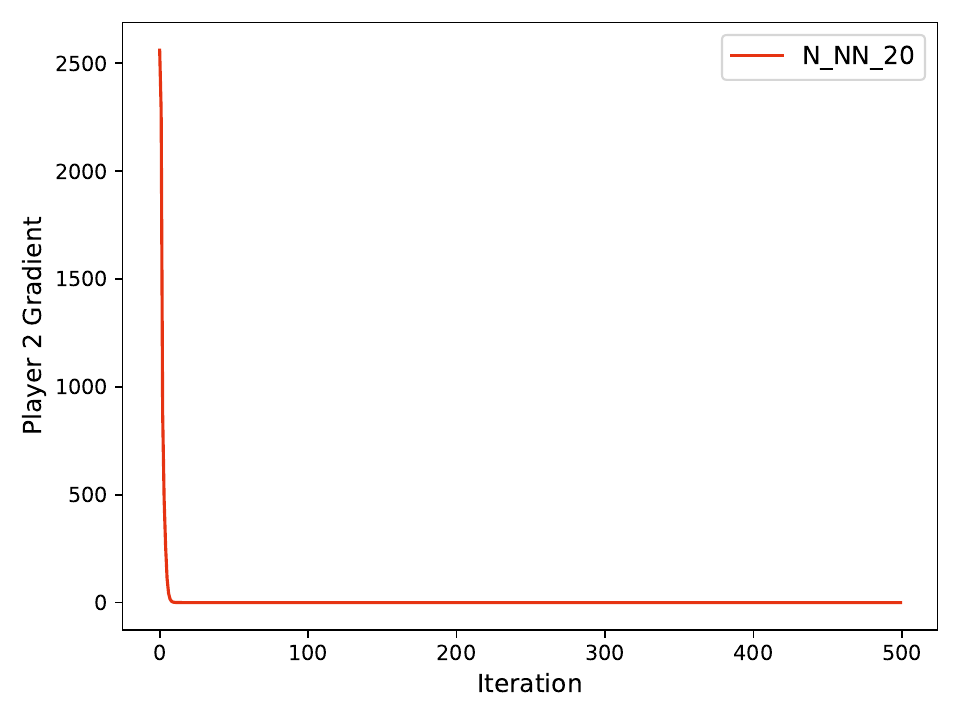}
\caption{Evolution of the gradient of player 2.}
\label{fig:olsder_grad_2}
\end{subfigure}
\caption{Gradients of the two players throughout the simulation.}
\label{fig:olsder_grad}
\end{figure}

\begin{figure}[h]
\begin{subfigure}{.48\linewidth}
    \includegraphics[width=\linewidth]{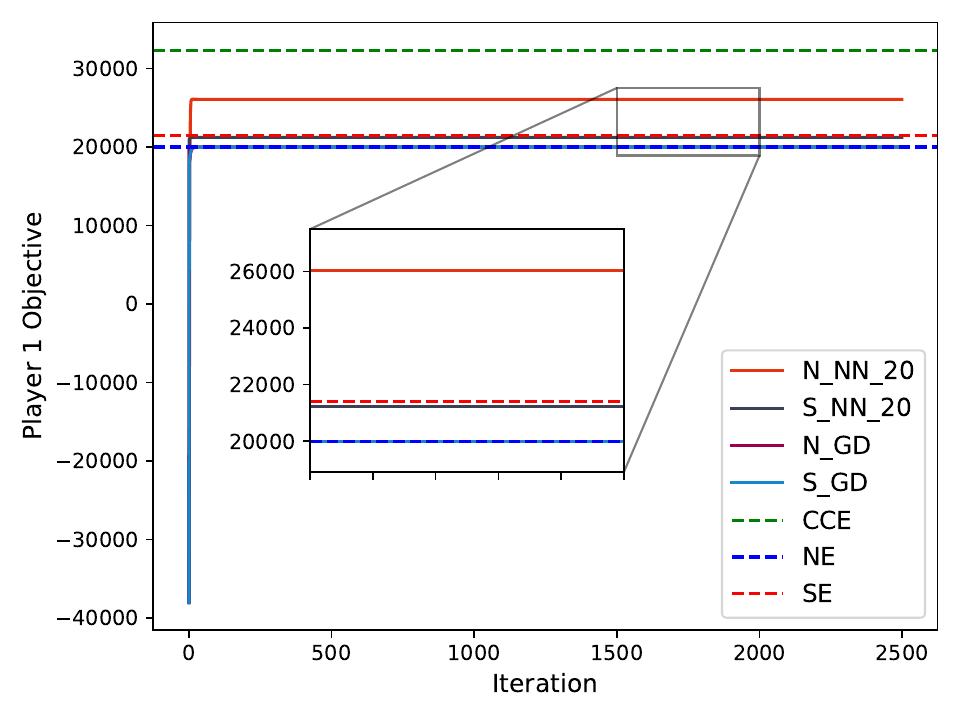}
\caption{Evolution of the objective function of player 1.}
\label{fig:olsder_obj_1}
\end{subfigure}
\hfill
\begin{subfigure}{.48\linewidth}
    \includegraphics[width=\linewidth]{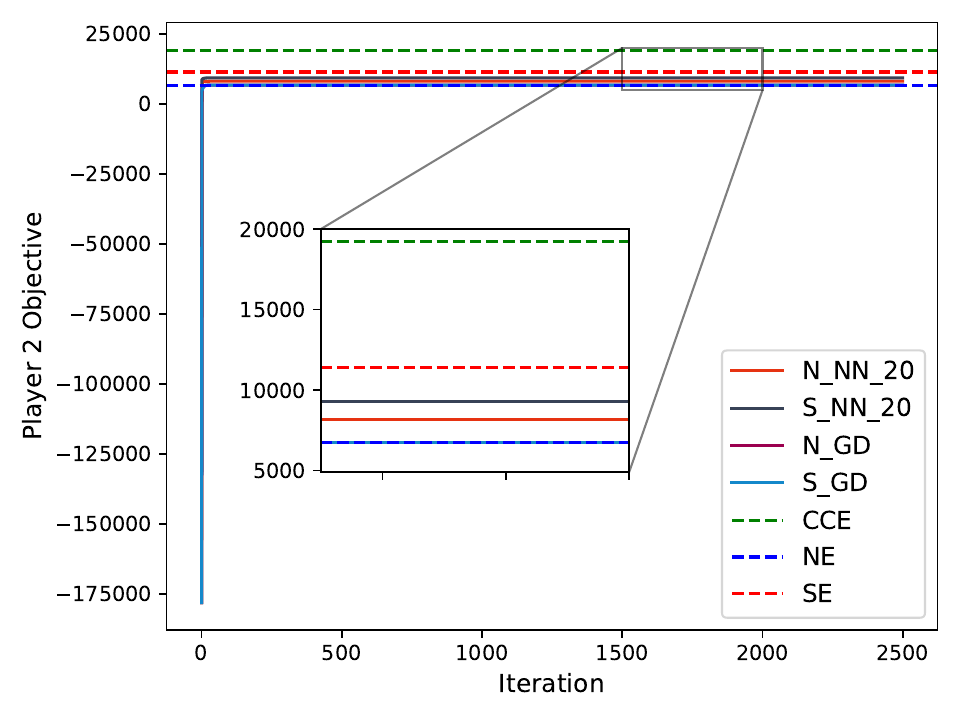}
\caption{Evolution of the objective function of player 2.}
\label{fig:olsder_obj_2}
\end{subfigure}
\caption{Objective functions of the two players throughout the simulation.}
\label{fig:olsder_obj}
\end{figure}

\end{document}